\definecolor{shadecolor}{rgb}{0.9,0.9,0.9}
\newtheorem{definition}{Definition}
\newtheorem{proposition}{Proposition}
\newtheorem{lemma}[proposition]{Lemma}
\newtheorem{theorem}[proposition]{Theorem}
\def\squareforqed{\hbox{\rlap{$\sqcap$}$\sqcup$}}
\def\qed{\ifmmode\squareforqed\else{\unskip\nobreak\hfil
\penalty50\hskip1em\null\nobreak\hfil\squareforqed
\parfillskip=0pt\finalhyphendemerits=0\endgraf}\fi}
\def\endenv{\ifmmode\;\else{\unskip\nobreak\hfil
\penalty50\hskip1em\null\nobreak\hfil\;
\parfillskip=0pt\finalhyphendemerits=0\endgraf}\fi}
\newenvironment{proof}{\noindent \textbf{{Proof~} }}{\hfill $\blacksquare$}
\newcounter{remark}
\newcounter{example}
\mathchardef\ordinarycolon\mathcode`\:
\def\vcentcolon{\mathrel{\mathop\ordinarycolon}}
\newmdenv[skipabove=7pt,
skipbelow=7pt,
backgroundcolor=darkblue!15,
innerleftmargin=5pt,
innerrightmargin=5pt,
innertopmargin=5pt,
leftmargin=0cm,
rightmargin=0cm,
innerbottommargin=5pt,
linewidth=1pt]{tBox}
\newmdenv[skipabove=7pt,
skipbelow=7pt,
backgroundcolor=red!15,
innerleftmargin=5pt,
innerrightmargin=5pt,
innertopmargin=5pt,
leftmargin=0cm,
rightmargin=0cm,
innerbottommargin=5pt,
linewidth=1pt]{rBox}
\newmdenv[skipabove=7pt,
skipbelow=7pt,
backgroundcolor=blue2!25,
innerleftmargin=5pt,
innerrightmargin=5pt,
innertopmargin=5pt,
leftmargin=0cm,
rightmargin=0cm,
innerbottommargin=5pt,
linewidth=1pt]{dBox}
\newmdenv[skipabove=7pt,
skipbelow=7pt,
backgroundcolor=darkkblue!15,
innerleftmargin=5pt,
innerrightmargin=5pt,
innertopmargin=5pt,
leftmargin=0cm,
rightmargin=0cm,
innerbottommargin=5pt,
linewidth=1pt]{sBox}
\definecolor{darkblue}{RGB}{0,76,156}
\definecolor{darkkblue}{RGB}{0,0,153}
\definecolor{blue2}{RGB}{102,178,255}
\definecolor{darkred}{RGB}{195,0,0}
\newcommand{\nc}{\newcommand}
\nc{\rnc}{\renewcommand}
\nc{\lbar}[1]{\overline{#1}}
\nc{\bra}[1]{\langle#1|}
\nc{\ket}[1]{|#1\rangle}
\nc{\ketbra}[2]{|#1\rangle\!\langle#2|}
\nc{\braket}[2]{\langle#1|#2\rangle}
\nc{\proj}[1]{| #1\rangle\!\langle #1 |}
\nc{\avg}[1]{\langle#1\rangle}
\nc{\smfrac}[2]{\mbox{$\frac{#1}{#2}$}}
\nc{\tr}{\operatorname{Tr}}
\nc{\ox}{\otimes}
\nc{\dg}{\dagger}
\nc{\dn}{\downarrow}
\nc{\cA}{{\cal A}}
\nc{\cB}{{\cal B}}
\nc{\cC}{{\cal C}}
\nc{\cD}{{\cal D}}
\nc{\cE}{{\cal E}}
\nc{\cF}{{\cal F}}
\nc{\cG}{{\cal G}}
\nc{\cH}{{\cal H}}
\nc{\cI}{{\cal I}}
\nc{\cJ}{{\cal J}}
\nc{\cK}{{\cal K}}
\nc{\cL}{{\cal L}}
\nc{\cM}{{\cal M}}
\nc{\cN}{{\cal N}}
\nc{\cO}{{\cal O}}
\nc{\cP}{{\cal P}}
\nc{\cQ}{{\cal Q}}
\nc{\cR}{{\cal R}}
\nc{\cS}{{\cal S}}
\nc{\cT}{{\cal T}}
\nc{\cU}{{\cal U}}
\nc{\cV}{{\cal V}}
\nc{\cX}{{\cal X}}
\nc{\cY}{{\cal Y}}
\nc{\cZ}{{\cal Z}}
\nc{\cW}{{\cal W}}
\nc{\csupp}{{\operatorname{csupp}}}
\nc{\qsupp}{{\operatorname{qsupp}}}
\nc{\var}{{\operatorname{var}}}
\nc{\rar}{\rightarrow}
\nc{\lrar}{\longrightarrow}
\nc{\polylog}{{\operatorname{polylog}}}
\nc{\wt}{{\operatorname{wt}}}
\nc{\av}[1]{{\left\langle {#1} \right\rangle}}
\nc{\supp}{{\operatorname{supp}}}
\nc{\argmin}{{\operatorname{argmin}}}
\def\a{\alpha}
\def\x{\xi}
\nc{\RR}{{{\mathbb R}}}
\nc{\CC}{{{\mathbb C}}}
\nc{\FF}{{{\mathbb F}}}
\nc{\NN}{{{\mathbb N}}}
\nc{\ZZ}{{{\mathbb Z}}}
\nc{\PP}{{{\mathbb P}}}
\nc{\QQ}{{{\mathbb Q}}}
\nc{\UU}{{{\mathbb U}}}
\nc{\EE}{{{\mathbb E}}}
\nc{\id}{{\operatorname{id}}}
\nc{\CHSH}{{\operatorname{CHSH}}}
\nc{\be}{\begin{equation}}
\nc{\ee}{{\end{equation}}}
\nc{\bea}{\begin{eqnarray}}
\nc{\eea}{\end{eqnarray}}
\nc{\rU}{\mbox{U}}
\nc{\ob}[1]{#1}
\nc{\SEP}{{\text{\rm SEP}}}
\nc{\NS}{{\text{\rm NS}}}
\nc{\LOCC}{{\text{\rm LOCC}}}
\nc{\PPT}{{\text{\rm PPT}}}
\nc{\EXT}{{\text{\rm EXT}}}
\nc{\Sym}{{\operatorname{Sym}}}
\nc{\ERLO}{{E_{\text{r,LO}}}}
\nc{\ERLOCC}{{E_{\text{r,LOCC}}}}
\nc{\ERPPT}{{E_{\text{r,PPT}}}}
\nc{\ERLOCCinfty}{{E^{\infty}_{\text{r,LOCC}}}}
\nc{\Aram}{{\operatorname{\sf A}}}
\def\grd@save@target#1{%
  \def\grd@target{#1}}
\def\grd@save@start#1{%
  \def\grd@start{#1}}
\tikzset{
  grid with coordinates/.style={
    to path={%
      \pgfextra{%
        \edef\grd@@target{(\tikztotarget)}%
        \tikz@scan@one@point\grd@save@target\grd@@target\relax
        \edef\grd@@start{(\tikztostart)}%
        \tikz@scan@one@point\grd@save@start\grd@@start\relax
        \draw[minor help lines,magenta] (\tikztostart) grid (\tikztotarget);
        \draw[major help lines] (\tikztostart) grid (\tikztotarget);
        \grd@start
        \pgfmathsetmacro{\grd@xa}{\the\pgf@x/1cm}
        \pgfmathsetmacro{\grd@ya}{\the\pgf@y/1cm}
        \grd@target
        \pgfmathsetmacro{\grd@xb}{\the\pgf@x/1cm}
        \pgfmathsetmacro{\grd@yb}{\the\pgf@y/1cm}
        \pgfmathsetmacro{\grd@xc}{\grd@xa + \pgfkeysvalueof{/tikz/grid with coordinates/major step}}
        \pgfmathsetmacro{\grd@yc}{\grd@ya + \pgfkeysvalueof{/tikz/grid with coordinates/major step}}
        \foreach \x in {\grd@xa,\grd@xc,...,\grd@xb}
        \node[anchor=north] at (\x,\grd@ya) {\pgfmathprintnumber{\x}};
        \foreach \y in {\grd@ya,\grd@yc,...,\grd@yb}
        \node[anchor=east] at (\grd@xa,\y) {\pgfmathprintnumber{\y}};
      }
    }
  },
  minor help lines/.style={
    help lines,
    step=\pgfkeysvalueof{/tikz/grid with coordinates/minor step}
  },
  major help lines/.style={
    help lines,
    line width=\pgfkeysvalueof{/tikz/grid with coordinates/major line width},
    step=\pgfkeysvalueof{/tikz/grid with coordinates/major step}
  },
  grid with coordinates/.cd,
  minor step/.initial=.2,
  major step/.initial=1,
  major line width/.initial=2pt,
}
\def\problem@s{}
\newcounter{problems@cnt}
\newcommand{\allproblems}{\problem@s}
\begin{document}

\title{No-Go Theorems for Universal Quantum State Purification via Classically Simulable Operations}

\author{Keming He}
\author{Chengkai Zhu}
\author{Hongshun Yao}
\affiliation{Thrust of Artificial Intelligence, Information Hub, \\ The Hong Kong University of Science and Technology (Guangzhou), Guangdong 511453, China.}

\author{Jinguo Liu}
\affiliation{Thrust of Advanced Materials, Function Hub, \\ The Hong Kong University of Science and Technology (Guangzhou), Guangdong 511453, China.}

\author{Yinan Li}
 \email{yinan.li@whu.edu.cn}
\affiliation{School of Artificial Intelligence, Wuhan University, Hubei 430072, China}
\affiliation{National Center for Applied Mathematics in Hubei, Hubei 430072, China}
\affiliation{Hubei Key Laboratory of Computational Science, Hubei 430072, China}

\author{Xin Wang}
\email{felixxinwang@hkust-gz.edu.cn}
\affiliation{Thrust of Artificial Intelligence, Information Hub, \\ The Hong Kong University of Science and Technology (Guangzhou), Guangdong 511453, China.}



\date{\today}

\begin{abstract}
Quantum state purification, a process that aims to recover a state closer to a system's principal eigenstate from multiple copies of an unknown noisy quantum state, is crucial for restoring noisy states to a more useful form in quantum information processing. Fault-tolerant quantum computation relies on stabilizer operations, which are classically simulable protocols critical for error correction but inherently limited in computational power. In this work, we investigate the limitations of classically simulable operations for quantum state purification. We demonstrate that while certain classically simulable operations can enhance fidelity for specific noisy state ensembles, they cannot achieve universal purification. We prove that neither deterministic nor probabilistic protocols using only classically simulable operations can achieve universal purification of two-copy noisy states for qubit systems and all odd dimensions. We further extend this no-go result of state purification using three and four copies via numerical solutions of semidefinite programs. Our findings highlight the indispensable role of non-stabilizer resources and the inherent limitations of classically simulable operations in quantum state purification, emphasizing the necessity of harnessing the full power of quantum operations for more robust quantum information processing.
\end{abstract}

\maketitle

\textit{Introduction}.---
Quantum state purification serves as a cornerstone technique in quantum information processing, addressing the persistent challenge of noise and decoherence that threaten quantum systems. This process transforms multiple copies of unknown noisy quantum states into purified versions closer to their principal eigenstate, thereby preserving the quantum coherence and fidelity essential for reliable quantum operations. By systematically extracting higher-quality states from multiple noisy copies, purification protocols like entanglement distillation and magic state distillation enable quantum technologies to function effectively despite environmental interference. Universal purification stands out as particularly valuable, offering the ability to improve arbitrary input states without requiring prior knowledge of their properties. This capability extends the practical utility of quantum systems across diverse applications, including quantum computing, communication, and cryptography, making purification an indispensable element of the quantum technology toolkit.

A number of state purification tasks and protocols have been proposed, each striving for various optimality criteria under specific resource and noise models. A prototypical example is the entanglement purification~\cite{bennett1996purification, bennett1996mixed, deutsch1996quantum, dur2007entanglement, horodecki2009quantum}, where multiple copies of noisy entangled states can distill a smaller number of high-fidelity entangled pairs using only local operations and classical
communication (LOCC) or other non-entangling operations. Universal entanglement purification has been studied recently~\cite{zang2025no}, where they prove that only LOCC cannot accomplish universal entanglement purification, while it is possible using more general positive partial-transpose-preserving (PPT) operations. Inspired by some of these works, general state purification under a depolarizing noise was studied~\cite{cirac1999optimal, keyl2001rate, fiuravsek2004optimal, fu2016quantum}, achieving an optimal purification procedure based on different criteria. Recent studies realizing optimal purification protocol utilizing swap-test~\cite{childs2025streaming, grier2025streaming} or controlled-permutation~\cite{yao2024protocols} offer applicable tools for their implementations in quantum devices.

Fault-tolerant quantum computation (FTQC) is another strategy that mitigates decoherence and hardware imperfections by encoding quantum information using quantum error-correcting codes~\cite{shor1995scheme, steane1996error}, some of which have been realized on superconducting qubits and neutral atom platforms~\cite {acharya2024quantum, gupta2024encoding, bluvstein2024logical, rodriguez2024experimental, xu2024constant, bluvstein2025architectural}. 
Stabilizer operations, i.e., Clifford gates, Pauli measurements, and stabilizer state preparations, are central to FTQC. Their foundational properties are captured by the Gottesman-Knill theorem~\cite{gottesman1997stabilizer}, demonstrating that they can be efficiently simulated on a classical computer. To attain the power of universal quantum computation, one must supplement stabilizer operations with magic states or other magic resources~\cite{gottesman1999demonstrating, zhou2000methodology, bravyi2005universal}. The utility of the magic state motivates the development of a quantum resource theory of magic where the free operations are the stabilizer operations, and the free states are the stabilizer states~\cite{wang2019quantifying, seddon2019quantifying, howard2017application, veitch2014resource, chen2025physical, seddon2021quantifying}. Building on these advancements, it is natural to investigate whether the magic resources, with their inherent complexity and cost, are necessary for universal quantum state purification, or whether classically simulable operations alone can achieve this goal. 

In this work, we study this question by developing a framework of quantum state purification under free operations without magic resources. We consider probabilistic state purification~\cite{fiuravsek2004optimal} and define the maximal non-magic purification fidelity that can be computed using semidefinite programming~(SDP). For two-copy inputs, we prove no-go theorems for universal purification via classically simulable operations, i.e., classically simulable operations cannot increase even arbitrarily small fidelity for the purification of the universal state set, no matter what the success probability of the protocol is. We further numerically demonstrate this infeasibility of purification for 3- and 4-copy inputs. 
Furthermore, in the purification for some small sets of states, we numerically show the gaps of the maximal fidelity between classically simulable operations and general quantum operations. The framework of our results is illustrated in Fig.~\ref{fig:framework}.
Our findings reveal the inherent limitations of stabilizer resources in state purification and advance our insights into the border of quantum and classical domains. 

\begin{figure}[t]
    \centering
    \includegraphics[width=\linewidth]{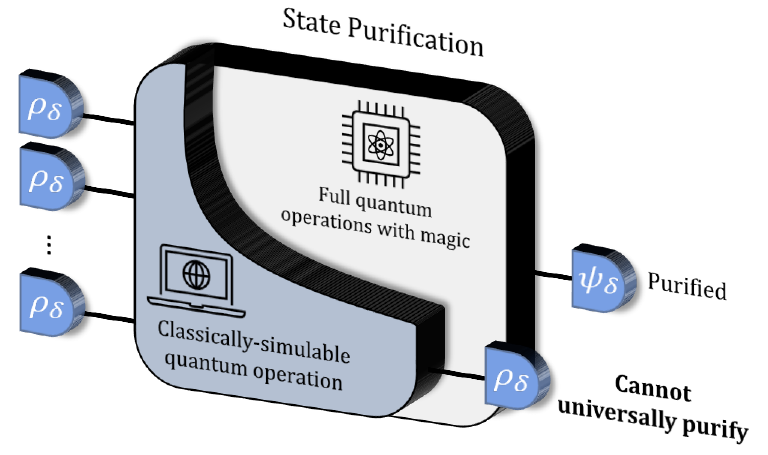}
    \caption{The framework of universal probabilistic purification. The task is to purify multiple copies of a noisy quantum state $\rho_\delta$ into a purer state $\sigma$ using a purification protocol. Protocols with magic are able to conduct universal purification, while classically simulable operations fail to achieve the goal. }
    \label{fig:framework}
\end{figure}

\textit{Preliminaries}. ---
Let $\cH_A$ be a $d$-dimensional Hilbert space with computational basis $\{\ket{j} \}_{j=0,\cdots,d-1}$. A quantum channel $\cN_{A\to B}$ is a linear, completely positive and trace preserving (CPTP) map. We also work with quantum operations that are completely positive and trace non-increasing (CPTN). 
We denote $\cD_\delta(\cdot) = (1-\delta)(\cdot) + \delta I_d/d$ as a depolarizing channel with error parameter $\delta$.

For prime dimension $d$, the Heisenberg-Weyl operators are defined as $T_{\mathbf{u}} \coloneqq T_{a_1, a_2} = \tau^{-a_1a_2}Z^{a_1}X^{a_2},$ where $\tau=e^{(d+1)\pi i/d}$, $X$ and $Z$ are unitary boost and shift operators respectively, and $(a_1, a_2)\in \mathbb{Z}_d\times \mathbb{Z}_d$. The phase space point operator $A^{\mathbf{u}}$ is defined as 
    $A^{\mathbf{0}} \coloneqq\frac{1}{d}\sum_\mathbf{u}T_{\mathbf{u}}, \quad A^{\mathbf{u}} \coloneqq T_{\mathbf{u}}A_{\mathbf{0}}T_{\mathbf{u}}^\dagger.$
The discrete Wigner function of a state $\rho$ is then defined as
$W_{\rho}(\mathbf{u})\coloneqq\frac{1}{d}\tr [A^{\mathbf{u}}\rho].$ For odd dimension, a pure state is a stabilizer state if and only if its discrete Wigner functions are non-negative~\cite{gross2006hudson}.
A quantum channel is called completely positive Wigner preserving (CPWP) if it maps states with non-negative discrete Wigner functions to themselves~\cite{wang2019quantifying}. 

For the qubit case, however, the discrete phase space approach is unsuitable as magic witness~\cite{delfosse2015wigner, raussendorf2017contextuality, raussendorf2020phase}.
Thus, we instead consider another set of stabilizer operations. A quantum channel is called completely stabilizer preserving operations (CSPO) if it maps stabilizer states to stabilizer states~\cite{seddon2019quantifying}. Both CPWP operation and CSPO are standard free operations in the magic resource theory. For odd $d$, CSPO is strictly contained in CPWP~\cite{wang2019quantifying}, but it is hard to witness CSPO except for the qubit case. In what follows, we use CPWP operations (odd $d$) and CSPO ($d=2$) as our free-operation classes. We will show that CSPO and CPWP operations share surprising similarity to some extent on the universal state purification task. More details of both operations can be found in Appendix~\ref{appendix: preliminaries}.  

\textit{No-go theorems for non-magic universal probabilistic purification}.---
The fundamental challenge of noise and decoherence in quantum systems necessitates robust strategies to preserve and enhance quantum information. One such indispensable strategy is quantum state purification, whose general goal is to transform multiple copies of an unknown noisy state into a single, higher-fidelity state approximating the ideal pure form through a specific protocol. In this work, we aim to explore the capability of non-magic operations in the task of state purification under the depolarizing noise $\cD_\delta(\cdot) = (1-\delta)(\cdot) + \delta I_d/d$. We introduce the set of interest $\cA_{\text{CSPO}}\coloneqq \{\text{CSPO}\cap \text{CPTN} \}$, and $\cA_{\text{CPWP}}\coloneqq \{\text{CPWP}\cap \text{CPTN} \}$, and define the set of non-magic purification protocols $\cA$ as
\begin{equation}
    \cA\coloneqq \left\{ 
    \begin{aligned}
        & \cA_{\text{CSPO}},~ \text{(for qubit)}, \\
        &\cA_{\text{CPWP}},~ \text{(for odd dimensional qudit)}. \\
    \end{aligned}
    \right.
\end{equation}

Supposing that we have multiple copies of an unknown pure state undergoing the depolarizing noise $\cD_\delta$, we would like to obtain a state with high fidelity to the ideal pure state for a certain probability. Let $\ket{\psi}$ be the ideal pure state and denote $\psi=\ketbra{\psi}{\psi}$ as its density matrix. We try to find a protocol $\cE_{A^n \to A}$ to purify the noisy states $\cE_{A^n \to A}(\cD_\delta(\psi)^{\ox n})$. Assume that the pure state $\psi_i$ is from a set of pure states $\Psi = \{\psi_1,\psi_2,\cdots  \}$, and denote the size of the set as $|\Psi|$. 
Considering the probabilistic quantum state purification in~\cite{fiuravsek2004optimal}, for a given purification protocol $\cE_{A^n \to A}$, the average purification fidelity of $\Psi$ is 
\begin{equation}
    F(\cE;\Psi)  
    =  \frac{\sum_{\psi_i \in \Psi}\tr[\sigma_{\psi_i} \psi_i]}{\sum_{\psi_i \in \Psi}\tr[\sigma_{\psi_i}]}.
\end{equation}
where $\sigma_{\psi_i}=\cE_{A^n \to A}(\cD_\delta(\psi_i)^{\ox n})$ is the unnormalized post-selected state, and $\cE_{A^n \to A}$ is considered in the set of  non-magic purification protocols $\cA$. For a protocol to be an effective purification, it must achieve an average purification fidelity $F$ that is higher than the initial average fidelity of the noisy input states, $F(\cE;\Psi) > \frac{1}{|\Psi|}\sum_{\psi_i \in \Psi}\tr[\cD_\delta(\psi_i) \psi_i]$. To evaluate the average fidelity for the best non-magic purification protocol, we introduce the \textit{maximal non-magic purification fidelity of $\Psi$} as follows: 
\begin{equation}\label{eq:max_fidelity}
    \begin{aligned}
        F_{\cD_\delta}^{\cA}(n, p; \Psi)\coloneqq  \max_{\cE_{A^n \to A}} & \; \frac{1}{p|\Psi|}{\sum_{\psi_i \in \Psi}\tr[\sigma_{\psi_i} \psi_i]} \\
        {\rm s.t.} & \; \frac{1}{|\Psi|} \sum_{\psi_i \in \Psi}\tr[\sigma_{\psi_i}] = p, \cE_{A^n \to A} \in \cA,
    \end{aligned} 
    \end{equation}
where $p$ is the success probability of the purification. This definition quantifies the highest average fidelity achievable for a given set of noisy input states when purification is restricted to the previously defined non-magic operations in $\cA$. The definition extends to the continuous state set by replacing the summation with the integral over the measure of states. $F_{\cD_\delta}^{\cA}(n, p; \Psi)$ can be formulated as a semidefinite programm~(SDP)~\cite{yao2024protocols}. More formal definitions and derivations can be found in Appendix~\ref{appendix:sdp_max_fid}.

From the definition above, the performance of non-magic purification depends largely on the choice of input state set $\Psi$ as well as the success probability $p$. We first consider the set of all pure states equipped with the Haar measure. We refer to the corresponding task as \textit{non-magic universal
probabilistic purification}, and denote by $F^{\cA}_{\cD_{\delta}}(n,p)$ the value of  Eq.\eqref{eq:max_fidelity} on this set and refer to it as \textit{maximal non-magic universal purification fidelity}. By averaging over the Haar measure, this quantity is independent of any particular ensemble and thus universal.

We focus on the case of two-copy inputs $n=2$. Our first main result demonstrates that there is no such non-magic protocol for effective universal state purification, showing that classically simulable operations make no effort to this task.

\begin{theorem}{\rm (No-go theorem for universal probabilistic purification via CPWP operations)}\label{thm:nogo of CPWP puri} 
There is no two-to-one universal probabilistic purification protocol using CPWP operations for qudit states with odd $d$ under depolarizing noise.
\end{theorem}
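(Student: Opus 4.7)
The plan is to cast $F^{\cA}_{\cD_\delta}(2,p)$ as a linear optimization over the Choi operator $J_\cE \in \cL(A^{\otimes 3})$ of a CPWP, CPTN channel $\cE : A^{\otimes 2} \to A$, and then to supply an explicit Wigner-function-based dual witness that precludes any universal fidelity gain. Upon Haar-averaging, the purification numerator and the success probability become linear functionals of $J_\cE$:
\begin{equation}
    p\,F(\cE) = \tr\bigl[J_\cE\,\Xi\bigr], \qquad p = \tr\bigl[J_\cE\,\Xi_0\bigr],
\end{equation}
where, modulo the transpose convention in the Choi representation, $\Xi := \int \cD_\delta(\psi)^{\otimes 2} \otimes \psi\, d\psi$ and $\Xi_0 := \int \cD_\delta(\psi)^{\otimes 2}\otimes I_A\, d\psi$ are fixed operators on $A^{\otimes 3}$. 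Letting $F_{\mathrm{init}} := 1 - \delta(d-1)/d$ denote the initial Haar-averaged fidelity, the no-go claim $F(\cE) \leq F_{\mathrm{init}}$ then reduces to the uniform inequality $\tr[J_\cE\,M] \leq 0$ for all CPWP, CPTN $J_\cE$, with $M := \Xi - F_{\mathrm{init}}\,\Xi_0$.

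The key step is to invoke the discrete Wigner-function orthogonality identity, $\tr[XY] = d^3\sum_{\vec{\mathbf{u}}} W_X(\vec{\mathbf{u}})\,W_Y(\vec{\mathbf{u}})$, valid on the tripartite phase space for odd $d$, together with the defining characterization of CPWP---namely that the Choi operator has non-negative discrete Wigner function, $W_{J_\cE}(\vec{\mathbf{u}}) \geq 0$. This reduces the operator inequality $\tr[J_\cE M] \leq 0$ to the pointwise scalar bound $W_M(\vec{\mathbf{u}}) \leq 0$ for every $\vec{\mathbf{u}} \in (\mathbb{Z}_d^{\,2})^{3}$, a concrete combinatorial task that bypasses full SDP duality.

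To evaluate $W_M$ I would expand $\cD_\delta(\psi) = (1-\delta)\psi + (\delta/d)I$ and apply the Haar-moment identities $\int \psi^{\otimes k}\,d\psi = \Pi_+^{(k)}/\binom{d+k-1}{k}$. The needed Wigner functions $W_{\Pi_+^{(2)}}$ and $W_{\Pi_+^{(3)}}$ have closed-form expressions in terms of Kronecker deltas and the triple trace $\mathrm{Re}\,\tr[A^{\mathbf{u}_1}A^{\mathbf{u}_2}A^{\mathbf{u}_3}]$, which for odd $d$ encodes the symplectic area of the triangle with vertices $\mathbf{u}_1,\mathbf{u}_2,\mathbf{u}_3$. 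The main obstacle will be the sign-indefiniteness of this triple trace: I plan to use the translation covariance $A^{\mathbf{u}+\mathbf{t}} = T_\mathbf{t} A^\mathbf{u} T_\mathbf{t}^\dagger$ to reduce to configurations centered at the origin, and then verify that the constant $F_{\mathrm{init}} = 1-\delta(d-1)/d$ is tuned precisely so that the positive contributions of $W_\Xi$ are dominated by $F_{\mathrm{init}} W_{\Xi_0}$ case-by-case (all three phase points distinct, two coincident, all three coincident). As a sanity check, the trivial CPWP protocol of simply discarding one input copy attains $F = F_{\mathrm{init}}$ exactly and saturates $\tr[J_\cE M]=0$, confirming that any certificate of the no-go must be tight rather than strict.
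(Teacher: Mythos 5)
Your lower-bound observation (discarding one copy is CPWP and attains $F_{\mathrm{init}}=1-\tfrac{d-1}{d}\delta$) is correct and matches the paper. The upper bound, however, has a genuine gap: the reduction of $\tr[J_\cE M]\le 0$ to the pointwise condition $W_M(\vec{\mathbf{u}})\le 0$ is invalid because that pointwise condition is simply false. Concretely, take $d=3$ and the phase-space configuration $\mathbf{u}_1=\mathbf{u}_3\neq\mathbf{u}_2$ (so $\delta_{13}=1$, $\delta_{12}=\delta_{23}=0$, and $\mathrm{Re}\,\tr[A^{\mathbf{u}_1}A^{\mathbf{u}_2}A^{\mathbf{u}_3}]=\mathrm{Re}\,\tr[A^{\mathbf{u}_2}]=1$ since $(A^{\mathbf{u}})^2=I$). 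Using $W_{\Pi_3/D(3,d)}=\frac{1}{6d^3D(3,d)}\bigl[1+d(\delta_{12}+\delta_{13}+\delta_{23})+2R\bigr]$ and the analogous two-party formula, one finds for $\delta=0.1$ that $W_\Xi\approx 3.48\times 10^{-3}$ while $F_{\mathrm{init}}W_{\Xi_0}\approx 3.06\times 10^{-3}$, so $W_M>0$ there (the same happens for all sufficiently small $\delta$, precisely the regime where the theorem is nontrivial). Your planned "case-by-case domination" therefore cannot succeed.

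The failure is structural, not computational. A certificate that uses only $W_{J_\cE}\ge 0$ proves the bound over the set of \emph{all} Hermitian operators with non-negative Wigner function and the right normalization, which strictly contains the Choi operators of CPWP$\,\cap\,$CPTN maps (it drops $J_\cE\ge 0$ and $\tr_{A_O}J_\cE\le I$). Over that larger set the no-go statement is actually false — e.g.\ the operator proportional to $(A^{\mathbf{u}_1})^T\ox(A^{\mathbf{u}_2})^T\ox A^{\mathbf{u}_3}$ at the configuration above has a delta-function (hence non-negative) Wigner function but is not positive semidefinite, and it "achieves" fidelity above $F_{\mathrm{init}}$. So no pointwise-Wigner witness can exist, and full SDP duality cannot be bypassed. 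The paper's proof uses both constraints simultaneously: it exhibits a correction operator $C=2\alpha I-\alpha[\mathbf{P}_3((123))+\mathbf{P}_3((132))]$ whose phase-space coefficients are non-positive (via the lemma $\mathrm{Re}\,\tr[A^{\mathbf{r}}A^{\mathbf{w}}A^{\mathbf{v}}]\le 1$, which handles the Wigner constraint) and then proves the \emph{operator} inequality $M\le C^{T}$ in the positive-semidefinite order (which consumes $J_\cE\ge 0$), verifying positivity of the residual in the Eggeling--Werner basis for $S_3$-invariant operators. Your argument corresponds to forcing $C=M$, and $M$ does not have non-positive phase-space coefficients. To repair the proposal you would need to reintroduce a PSD slack term of exactly this kind, at which point you have reconstructed the paper's dual certificate.
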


We also show that CSPOs fail to conduct universal purification for the single-qubit scenario.

\begin{theorem}{\rm (No-go theorem for universal probabilistic purification via CSPOs)}\label{thm:nogo of CSPO puri}
There is no two-to-one universal probabilistic purification protocol using CSPOs for noisy qubit states under depolarizing noise.
\end{theorem}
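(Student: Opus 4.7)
The plan is to bound $F_{\cD_\delta}^{\cA_{\mathrm{CSPO}}}(2,p)$ from above by the initial noisy fidelity $1-\delta/2$ by dualizing the SDP of Eq.~\eqref{eq:max_fidelity} over the qubit stabilizer polytope.

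First, I would recast the objective as linear functionals of the Choi operator $J_\cE\in\cL((\CC^2)^{\otimes 3})$ of $\cE_{A^2\to A}$. Using $\tr[\cE(\rho)\sigma]=\tr[J_\cE(\rho^T\otimes\sigma)]$, the numerator and the success probability become $\tr[J_\cE K_{\mathrm{num}}]$ and $\tr[J_\cE K_{\mathrm{prob}}]$, respectively, with
\begin{equation*}
K_{\mathrm{num}} := \int d\psi\,(\cD_\delta(\psi)^{\otimes 2})^T \otimes \psi,
\end{equation*}
\begin{equation*}
K_{\mathrm{prob}} := \Bigl(\int d\psi\,\cD_\delta(\psi)^{\otimes 2}\Bigr)^T \otimes I_B.
\end{equation*}
Because $\tr[\cD_\delta(\psi)\psi]=1-\delta/2$ is independent of $\psi$ for qubits, Theorem~\ref{thm:nogo of CSPO puri} reduces to the single witness inequality $\tr[J_\cE W]\leq 0$ for every $J_\cE$ in the CSPO cone, where $W := K_{\mathrm{num}} - (1-\delta/2)\,K_{\mathrm{prob}}$.

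Next, I would invoke the fact that for qubits a channel is CSPO if and only if its Choi operator is a non-negative combination of pure multi-qubit stabilizer states, so the CSPO cone is the conic hull of pure $3$-qubit stabilizer states. Hence the witness inequality holds universally if and only if
\begin{equation*}
\langle s | W | s \rangle \leq 0 \quad \text{for every pure $3$-qubit stabilizer state } |s\rangle.
\end{equation*}
The Haar moments in $W$ are evaluated via $\int d\psi\,\psi^{\otimes k}=\Pi_{\mathrm{sym}}^{(k)}/(k+1)$ for the qubit, which places $W$ in the $SU(2)^{\otimes 3}$-invariant subalgebra of $\cL((\CC^2)^{\otimes 3})$; by Schur-Weyl duality this subalgebra is five-dimensional and spanned by the image of permutations in $S_3$.

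Finally, I would make the verification finite by exploiting symmetry. Since $W$ is $SU(2)^{\otimes 3}$-invariant, $\langle s | W | s \rangle$ depends only on the $SU(2)^{\otimes 3}$-orbit of $|s\rangle$. Because $\mathrm{Cl}_1^{\otimes 3}\subset SU(2)^{\otimes 3}$ and single-qubit Cliffords permute the six single-qubit stabilizer states transitively, the pure $3$-qubit stabilizer states fall into a short list of orbits -- triple products of single-qubit stabilizers, Bell-type bipartite stabilizer states tensored with a single-qubit stabilizer, and GHZ-type genuinely tripartite stabilizer states. Taking one representative from each orbit reduces the task to a handful of closed-form polynomial inequalities in $\delta$, which I would check to hold on $\delta\in(0,1]$. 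The main obstacle will be the last step, specifically the GHZ-type tripartite orbits: here $W$ couples all three registers through the higher Haar moment $\Pi_{\mathrm{sym}}^{(3)}/4$, which contributes positively in $K_{\mathrm{num}}$, and one must show that this contribution is dominated by $(1-\delta/2)\,K_{\mathrm{prob}}$ for every such stabilizer configuration. Establishing this yields an explicit dual certificate that rules out any CSPO-based universal purification protocol irrespective of its success probability.
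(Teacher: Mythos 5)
Your reduction of the no-go claim to a single witness inequality $\tr[J_\cE W]\le 0$ over the cone of CSPO Choi operators, and then to $\langle s|W|s\rangle\le 0$ on pure three-qubit stabilizer states (using that a trace-non-increasing CSPO has Choi operator in the conic hull of pure stabilizer projectors), is sound and is essentially the same dual/witness mechanism the paper employs; the paper packages it as an explicit dual-feasible point of the SDP in Proposition~\ref{sdp:discrete cspo}, with the stabilizer-evaluation condition appearing as $\sum_i y_iG_{ij}\le 0$.

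The gap is in your symmetry step. The operator $W=K_{\mathrm{num}}-(1-\delta/2)K_{\mathrm{prob}}$ is \emph{not} $SU(2)^{\ox 3}$-invariant, nor even invariant under the diagonal action $U^{\ox 3}$: because of the partial transpose on the two input registers it commutes only with $\bar U\ox\bar U\ox U$ for a single $U\in SU(2)$ (together with the swap of the two inputs). Consequently $\langle s|W|s\rangle$ is \emph{not} constant on local-unitary or local-Clifford orbits of stabilizer states: for instance $\ket{000}$ and $\ket{001}$ are local-Clifford equivalent but give different values, since $K_{\mathrm{num}}=\int (\cD_\delta(\psi)^{\ox 2})^{T}\ox\psi\,d\psi$ correlates the output register with the inputs. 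So the proposed reduction to one representative per equivalence class (product, Bell$\,\ox\,$single, GHZ) does not go through; you would have to enumerate all $1080$ pure three-qubit stabilizer states, or at best quotient by the much smaller twisted-diagonal Clifford group, and in any case the decisive inequalities --- including the GHZ-type case you yourself flag as the main obstacle --- are left unverified, so no certificate is actually produced. For comparison, the paper sidesteps a direct evaluation of $W$ on stabilizer states: it bounds the witness by an explicit permutation combination $D=2\beta I-\beta[\mathbf{P}_3((123))+\mathbf{P}_3((132))]$ with $\beta\le 0$, proves the operator inequality $W\le D^{T_{12}}$ in the Eggeling--Werner invariant basis, and then only needs $\tr[D\,\ketbra{\phi_j}{\phi_j}]\le 0$, which Pauli algebra reduces to the finite value set $\{3\beta,2\beta,\tfrac32\beta,\beta,0\}$.
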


We sketch the proof of Theorem~\ref{thm:nogo of CPWP puri} and Theorem~\ref{thm:nogo of CSPO puri} as follows. The core idea is to demonstrate that the maximal non-magic universal purification fidelity $F^{\cA}_{\cD_{\delta}}(2,p)$ always equals the fidelity of the noisy state without any purification $\int \tr[\cD_\delta(\psi)\psi] d\psi = 1 - \frac{d-1}{d}\delta$ for any success probability $p\in (0,1]$. This equality is established by showing that $1 - \frac{d-1}{d}\delta$ serves as both lower bound and upper bound to the SDPs on $F^{\cA}_{\cD_{\delta}}(2,p)$. The full proof for Theorem~\ref{thm:nogo of CPWP puri} is in Appendix~\ref{appendix: proof_CPWP}, and the proof for Theorem~\ref{thm:nogo of CSPO puri} is in Appendix~\ref{appendix: proof_CSPO}.

Theorem~\ref{thm:nogo of CPWP puri} and~\ref{thm:nogo of CSPO puri} indicate that CSPO and CPWP operations offer no improvement for universal state purification with two-copy inputs, no matter what the success probability $p$ is. Since stabilizer operations belong to the classes CPWP and CSPO~\cite{wang2019quantifying, seddon2019quantifying}, we also conclude that stabilizer operations make no effort to this task. We also numerically observe the impossibility of universal purification up to the number of copies $n=4$ for both qubit and qutrit, which is shown in Table~\ref{tab:numerical nogo}. The numerical experiments are implemented in MATLAB~\cite{MATLAB} with interpreters CVX~\cite{cvx} and QETLAB~\cite{qetlab}. The codes are available at~\cite{coderepo}.
\begin{table}[htbp]
\centering
\begin{tabular}{c|ccc}
\hline
& $n=2$ & $n=3$ & $n=4$  \\
\hline 
$d=2(\cA_{\text{CSPO}})$ & \ding{53} & \ding{53} & \ding{53}  \\
$d=3(\cA_{\text{CPWP}})$ & \ding{53} & \ding{53} & \ding{53}  \\
\hline
\end{tabular}
\caption{The numerical results of maximal non-magic universal purification fidelity. "\ding{53}" indicates the impossibilities of universal purification for any error parameters $\delta$ and success probability $p$. }
\label{tab:numerical nogo}
\end{table}
Based on the observation, we conjecture that the universal state purification for any dimension, any number of copies, is impossible without dynamic magic resources. For comparison, the universal purification without any resource constraint is achievable and optimal by projecting onto the symmetric subspace~\cite{childs2025streaming, grier2025streaming, yao2024protocols} through swap-test or more generalized controlled-permutation operations that are known to be magic~\cite{nielsen2010quantum}.

The impossibility of universal purification with classically simulable operations has direct engineering consequences. In practical quantum devices, operations that can be implemented using only stabilizer circuits generally exhibit lower error rates and time costs compared to operations that consume magic. This is primarily because, in current noisy intermediate-scale quantum~(NISQ) devices, arbitrary quantum gates are often decomposed into Clifford gates and elementary non-Clifford gates, such as the T-gate, which increases circuit depth and accumulates errors~\cite{dawson2005solovay, ross2014optimal}. For future fault-tolerant quantum computers, while magic state distillation can achieve low logical error rates, the entire protocol of preparing, distilling, and injecting these magic states is generally resource-intensive, requiring a high overhead in physical qubits and quantum gates, making these procedures costly in terms of system resources and overall error budget~\cite{jones2013multilevel, gidney2021factor, litinski2019game, litinski2019magic, campbell2017roads}. From a practical standpoint for quantum error mitigation, if universal purification could be achieved using only non-magic operations, it would offer advantages by leveraging their generally lower error rates and time costs for noise reduction, and potentially lower error propagation. 
Our results rule out this possibility, emphasizing the role
of magic as a vital resource for quantum computation. The inability of CSPO and CPWP operations to achieve universal purification definitively points to the quantum advantage that magic provides.

\textit{Purifying discrete ensembles with classically simulable operations.}---
The universal no-go results above show that, for Haar random pure inputs subject to depolarizing noise, any post-selected purification protocol built from non-magic operations cannot raise the output fidelity above the input baseline. This settles the universal purification question in the negative. However, universal impossibility does not preclude ensemble-specific purification. In many realistic settings, the relevant states are not arbitrary but come from specific scenarios such as code spaces, calibration sets, or eigenstates of interest. In such cases, structure and prior information can be exploited by classically simulable operations to reject noise selectively while preserving the states of interest with higher fidelity.

We still consider two-copy inputs $n=2$ purification. 
We illustrate two examples involving qubit and qutrit state purification to clearly observe the limitations imposed by the operations in $\cA_{\text{CSPO}}$ and $\cA_{\text{CPWP}}$. For comparison, and to benchmark the impact of these operational restrictions, we also consider the maximal fidelity without any resource restrictions on the purification protocols. We denote this as $F_{\cD_\delta}(2, p; \Psi)$, where the purifying protocol $\cE_{A^n \to A}$ can be any CPTN. The comparison between $F_{\cD_\delta}^{\cA}(2, p; \Psi)$ and $F_{\cD_\delta}(2, p; \Psi)$ will be central to witness the limitations imposed by the non-magic constraint. Figure~\ref{fig:simple_puri_cspo} shows the average fidelities $F_{\cD_\delta}^{\cA}(2, p; \Psi)$ and $F_{\cD_\delta}(2, p; \Psi)$ as functions of the noise parameter $\delta$ for values of the success probability $p=0.1$ and 0.6, in which (a) considers a set of two stabilizer states $\Psi = \{\ketbra{0}{0}, \ketbra{+}{+}\}$ for qubit, and (b) considers the set of four magic states $\Psi = \{\ketbra{\mathbb{S}}{\mathbb{S}}, \ketbra{\mathbb{N}}{\mathbb{N}}, \ketbra{T}{T}, \ketbra{H_+}{H_+}\}$ for qutrit, where
\begin{equation}
\begin{aligned}
&\ket{\mathbb{S}} = \frac{1}{\sqrt{2}}(\ket{1} - \ket{2}), \\ 
&\ket{\mathbb{N}} = \frac{1}{\sqrt{6}}(-\ket{0} + 2\ket{1} - \ket{2}), \\
&\ket{T} = \frac{1}{\sqrt{3}}\Big(e^{2\pi i/9}\ket{0} + \ket{1} + e^{-2\pi i/9}\ket{2}\Big), \\
&\ket{H_+} = \frac{1}{\sqrt{6 + 2\sqrt{3}}}\Big((1+\sqrt{3})\ket{0} + \ket{1} + \ket{2}\Big).
\end{aligned}
\end{equation}
These states are known to exhibit a high degree of magic and are relevant in magic state distillation~\cite{anwar2012qutrit, campbell2012magic, veitch2014resource}. 

Significant gaps are observed between the performances of $\cA_{\text{CSPO}}/\cA_{\text{CPWP}}$ and CPTN operations. CPTN operations consistently achieve higher fidelity than $\cA_{\text{CSPO}}/\cA_{\text{CPWP}}$ across all noise levels and success probabilities. This demonstrates that, even for the two simple state sets, the protocols in $\cA_{\text{CSPO}}/\cA_{\text{CPWP}}$ are fundamentally limited in their ability to purify quantum states. Observe that for both cases, the average fidelities converge when the noise parameter approaches 1, indicating that purification becomes trivial in the presence of maximal noise. However, the convergence values differ due to the limitation of magic resources. Specifically in Figure~\ref{fig:simple_puri_cspo}(a), CPTN purification converges to $\frac{2 + \sqrt{2}}{4} \approx 0.8536$ achieved with the replacement channel $\cE_{A^2_I \to A_O}(\cdot) = \tr(\cdot){\ketbra{\psi}{\psi}}$ where $\ket{\psi} = \cos \frac{\pi}{8}\ket{0} + \sin \frac{\pi}{8}\ket{1}$ is exactly the eigenstate of the Hadamard gate~\cite{bravyi2005universal}. In contrast, $\cA_{\text{CSPO}}$ purification converges to only $\frac{3}{4}$ via the replacement channel $\cE_{A^2_I \to A_O}(\cdot) = \tr(\cdot)\frac{(\ketbra{0}{0} + \ketbra{+}{+})}{2}$. This difference underscores the advantage provided by magic resources, as the CPTN map leverages a magic state to achieve a higher average fidelity.
\begin{figure*}[t]
    \centering
    \includegraphics[width=0.85\linewidth]{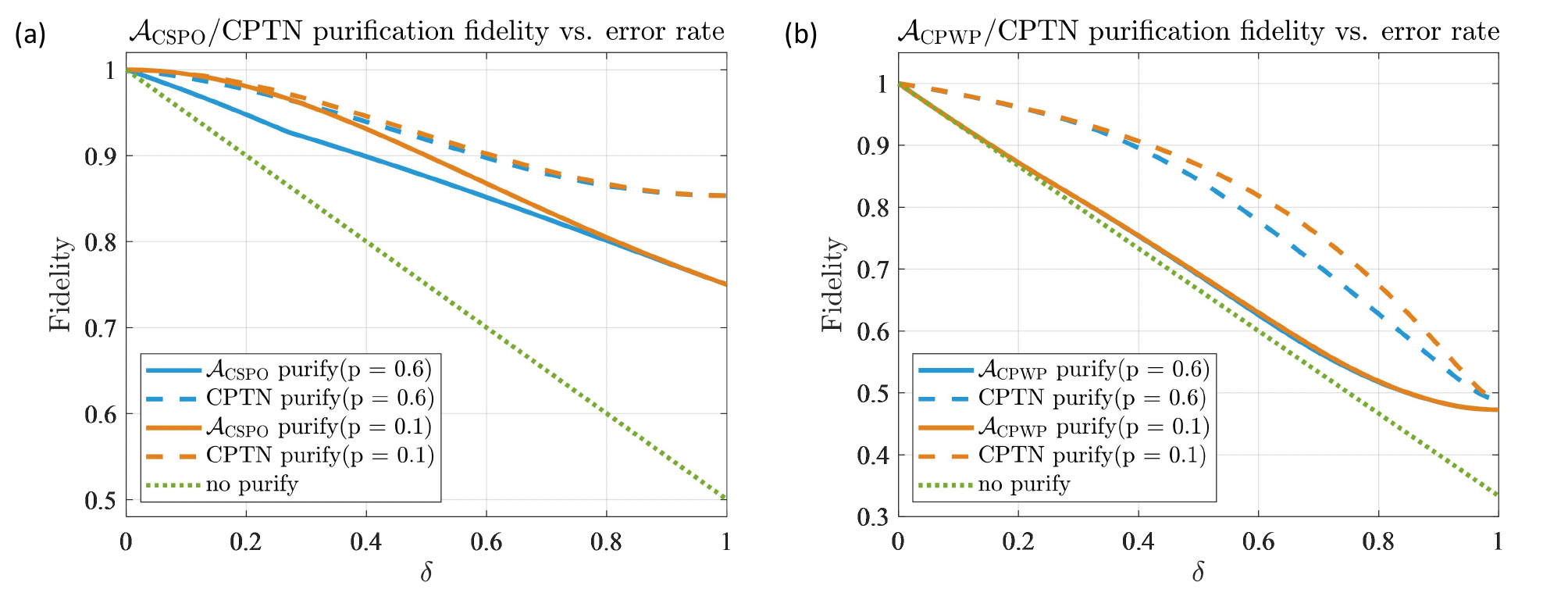}
    \caption{Maximal purification fidelity under depolarizing noise with success probabilities $p=0.1$ and $0.6$. (a) Comparison of average fidelities $F_{\cD_\delta}^{\cA}(2, p; \Psi)$ and $F_{\cD_\delta}(2, p; \Psi)$ for qubit stabilizer states $\Psi = \{\ketbra{0}{0}, \ketbra{+}{+}\}$. (b) Similar comparison for qutrit magic states $\Psi = \{\ketbra{\mathbb{S}}{\mathbb{S}}, \ketbra{\mathbb{N}}{\mathbb{N}}, \ketbra{T}{T}, \ketbra{H_+}{H_+}\}$. In both cases, classically simulable operations (CSPO/CPWP) achieve limited purification but remain strictly less effective than general quantum operations across all noise rates $\delta$.
    }
    \label{fig:simple_puri_cspo}
\end{figure*}

\textit{Concluding remarks}.---
In this work, we have investigated the quantum state purification task within the limitations of classically simulable operations. We formulated maximal non-magic purification fidelity, demonstrating that it can be computed using SDP. We then prove no-go results implying that classically simulable operations make no effort to universal purification for two-copy inputs, regardless of whether the protocol is deterministic or probabilistic, for a qubit system and any odd dimension. Furthermore, our 3- and 4-copy numerical experiments confirm that universal purification requires magic resources and remains infeasible under these restricted operations. We also demonstrate maximal purification fidelity for two small sets of states, showing that general quantum operations achieve better purification performance than classically simulable operations. This distinction underscores the indispensable role of magic resources in realizing state purification.

In contrast, protocols such as the swap-test and controlled-permutation use magic and can achieve universal purification. In fault-tolerant architectures, these non-Clifford steps are realized via magic state injection, which introduces additional noise and overhead determined by the fidelity of the distilled magic states and the injection circuits. A practical implementation is to perform stabilizer operations under error detection and correction, while providing the minimal non-Clifford subroutines through magic state distillation. High-fidelity magic state distillation combined with stabilizer operations supplies the required computational capability for purification while limiting additional noise.

Besides, our results open up several new research directions. The established framework and analytical methodology can be applied to investigate purification under dynamic resource limitations for other specific sets of states, such as pure magic states or entangled states. These tasks share conceptual similarities with established resource purification paradigms, including entanglement distillation and magic state distillation. Applying our approach to these aspects of resource purification may offer new insights into their practical limits.

Another direction involves analyzing universal state purification under different noise models. Our findings demonstrate that classically simulable operations are insufficient to purify states subjected to depolarizing noise. Given that depolarizing noise is generally considered the most challenging for purification methods like the swap-test~\cite{grier2025streaming}, this result highlights a significant limitation. However, this does not preclude their potential effectiveness against other noise types. For instance, recent work indicates that the amplitude damping (AD) channel can generate magic~\cite{trigueros2025nonstabilizerness}. Investigating non-magic purification protocols for the AD channel, and contrasting them with channels that do not generate magic, could elucidate the interplay between noise properties, magic generation, and purification feasibility.

Finally, this work contributes to the broader understanding of the resource theory of magic. While the resource theory of entanglement has significantly impacted quantum information processing and quantum communication, connections between the resource theory of magic and other operational tasks are less explored. Previous works have examined the impact of non-magic operations on quantum channel capacity~\cite{bu2025magic} and quantum state discrimination~\cite{zhu2024limitations}. Our findings provide a concrete example where magic is necessarily required for universal state purification. Exploring the capabilities and limitations of magic operations in a wider range of quantum information processing tasks remains an important area of study.

\textit{Acknowledgements}.---
We would like to thank Yu-Ao Chen and Bartosz Regula for the helpful discussion. This work was supported by the National Key R\&D Program of China (Grant No.~2024YFE0102500), the National Natural Science Foundation of China (Grant. No.~12447107), the Guangdong Provincial Quantum Science Strategic Initiative (Grant No.~GDZX2403008, GDZX2403001), the Guangdong Provincial Key Lab of Integrated Communication, Sensing and Computation for Ubiquitous Internet of Things (Grant No. 2023B1212010007), the Quantum Science Center of Guangdong-Hong Kong-Macao Greater Bay Area, and the Education Bureau of Guangzhou Municipality.

\bibliography{main}

\onecolumngrid
\newpage

\appendix
\setcounter{subsection}{0}
\setcounter{table}{0}
\setcounter{figure}{0}

\vspace{3cm}

\begin{center}
\Large{\textbf{Supplemental Material} \\ \textbf{
}}
\end{center}

\renewcommand{\theequation}{S\arabic{equation}}
\renewcommand{\theproposition}{S\arabic{proposition}}
\renewcommand{\thedefinition}{S\arabic{definition}}
\renewcommand{\thefigure}{S\arabic{figure}}
\setcounter{equation}{0}
\setcounter{table}{0}
\setcounter{section}{0}
\setcounter{proposition}{0}
\setcounter{definition}{0}
\setcounter{figure}{0}

In this Supplemental Material, we provide detailed proofs of the theorems and propositions in the manuscript ``No-Go Theorems for Universal Quantum State Purification via Classically Simulable Operations''. In Appendix~\ref{appendix: preliminaries}, we cover the notation in this work and preliminaries of the discrete Wigner function and the robustness of magic as the magic witness for odd qudit systems and qubit systems, respectively, and introduce the CPWP operation and CSPO. In Appendix~\ref{appendix:sdp_max_fid}, we give a formal definition of maximal non-magic purification fidelity and its corresponding SDP problems. In Appendix~\ref{appendix: proof_CPWP} and Appendix~\ref{appendix: proof_CSPO}, we prove Theorem~\ref{thm:nogo of CPWP puri} and~\ref{thm:nogo of CSPO puri} respectively, which are the no-go theorems or universal probabilistic purification via CPWP operations and CSPOs. 


\section{Preliminaries}\label{appendix: preliminaries}

We consider a finite-dimensional Hilbert space $\cH_A$ representing the quantum system $A$ with dimension $d$. Let $\{\ket{j} \}_{j=0,\cdots,d-1}$ be a standard computational basis. We use $\cL(\cH_A)$ to represent the set of linear operators that map from $\cH_A$ to itself. A density operator is a positive semidefinite operator in $\cL(\cH_A)$ with trace one, and $\cD(\cH_A)$ denotes the set of all density operators in $\cH_A$. A quantum channel $\cN_{A\to B}$ is a linear map from $\cL(\cH_A)$ to $\cL(\cH_{B})$ that is completely positive and trace-preserving (CPTP). We also introduce a quantum operation that is completely positive and trace non-increasing (CPTN). Their associated Choi-Jamio\l{}kowski operators are $J^{\cN}_{AB}\coloneqq \sum_{i, j=0}^{d-1}\ketbra{i}{j} \ox \cN_{A \to B}(\ketbra{i}{j})$~\cite{choi1975completely, jamiolkowski1972linear}. We denote $\cI$ as the identity channel, and $\cD_\delta$ as a depolarizing channel with error parameter $\delta$. The symmetric group of degree $n$ is denoted by $\cS_n$  and $\mathbf{P}_n(c)$ represents the permutation operator for $c\in\cS_n$. We denote $\Pi_n\coloneqq\frac{1}{n!}\sum_{c\in\cS_n}\mathbf{P}_n(c)$ as the projector on the symmetric subspace of $\cH_{A}^{\ox n}$.

\subsection{The discrete Wigner function}
Let us introduce the basics of the discrete Wigner function for quantum information theory~\cite{gross2006hudson,veitch2014resource, wang2019quantifying}.
For prime dimension $d$, the unitary boost and shift operators $X, Z\in \cL(\cH)$ are defined as: 
\begin{equation}
    X\ket{j}=\ket{j\oplus1}, \quad Z\ket{j}=w^j\ket{j},
\end{equation}
where $w=e^{2\pi i/d}$ and $\oplus$ denotes addition modulo $d$. The Heisenberg-Weyl operators are defined as 
\begin{equation}
    T_{\mathbf{u}} = \tau^{-a_1a_2}Z^{a_1}X^{a_2},
\end{equation}
where $\tau=e^{(d+1)\pi i/d}$, $\mathbf{u}=(a_1, a_2)\in \ZZ_d \times \ZZ_d$. For the composite system $\cH_A \ox \cH_B$, the Heisenberg-–Weyl operators are the tensor product of the subsystem Heisenberg–-Weyl operators:
\begin{equation}
    T_{\mathbf{u}_A \oplus \mathbf{u}_B} = T_{\mathbf{u}_A} \ox T_{\mathbf{u}_B},
\end{equation}
where $\mathbf{u}_A \oplus \mathbf{u}_B \in \ZZ_d \times \ZZ_d \times \ZZ_d \times \ZZ_d$. For each point $\mathbf{u}$ in the discrete phase space, there is a corresponding phase space point operator $A^{\mathbf{u}}$ defined as 
\begin{equation}
    A^{\mathbf{0}} \coloneqq\frac{1}{d}\sum_\mathbf{u}T_{\mathbf{u}}, \quad A^{\mathbf{u}} \coloneqq T_{\mathbf{u}}A_{\mathbf{0}}T_{\mathbf{u}}^\dagger.
\end{equation}
The discrete Wigner function of a state $\rho$ at the point $\mathbf{u}$ is then defined as
\begin{equation}
    W_{\rho}(\mathbf{u})=\frac{1}{d}\tr [A^{\mathbf{u}}\rho].
\end{equation}
More generally, we can replace $\rho$ with a Hermitian operator $H$ for the discrete Wigner function. Some useful properties are listed:
\begin{enumerate}
    \item $A^{\mathbf{u}}$ is Hermitian;
    \item $\sum_\mathbf{u}A^\mathbf{u}/d=I$;
    \item $\tr[A^\mathbf{u}A^{\mathbf{u'}}]=d\delta(\mathbf{u}, \mathbf{u'})$, where $\delta(a,b)$ is the discrete Dirac delta function;
    \item $\tr[A^\mathbf{u}]=1$;
    \item $H = \sum_{\mathbf{u}}W_H(\mathbf{u})A^\mathbf{u}$;
    \item $\{A^\mathbf{u}\}_\mathbf{u} = \{(A^\mathbf{u})^T\}_\mathbf{u}$;
    \item For the composite system $\cH_A \ox \cH_B$, the phase space point operators are the tensor product of the subsystem phase space point operators $A^{\mathbf{u}_A\oplus \mathbf{u}_B} = A^{\mathbf{u}_A} \ox A^{\mathbf{u}_B}$ ;
\end{enumerate}

A Hermitian operator $H$ has non-negative discrete Wigner functions if $\forall \mathbf{u}, W_H(\mathbf{u})\geq 0$. For odd dimensions, according to the discrete Hudson's theorem~\cite{gross2006hudson}, a pure state is a stabilizer state if and only if it has non-negative discrete Wigner functions. We denote the set of quantum states with non-negative discrete Wigner functions by $\cW_+$ (Wigner polytope)
\begin{equation}
    \cW_+ \coloneqq \big\{\rho:\forall \mathbf{u}, W_\rho(\mathbf{u})\geq 0, \rho \geq 0, \tr\rho=1  \big\}.
\end{equation}

The discrete Wigner function of a state can be naturally extended to quantum channels. We recall the definition of the discrete Wigner function of a quantum channel from~\cite{mari2012positive} (see the discussion surrounding~\cite[Eq.~(10)]{mari2012positive}), which is related to the Wigner function of a quantum channel as defined in~\cite[Eq.~(95)]{Bartlett_2012}, and formally defined in~\cite{wang2019quantifying}.

\begin{definition}{\rm (Discrete Wigner function of a quantum channel~\cite{wang2019quantifying})}
Given a quantum channel $\cN_{A\rightarrow B}$, its discrete Wigner function is defined as 
\begin{equation}
\begin{aligned}
    \cW_{\cN}(\mathbf{v}|\mathbf{u}) &\coloneqq \frac{1}{d_B}\tr \big[((A^\mathbf{u}_A)^T \ox A^\mathbf{v}_B) J^\cN_{AB}\big] \\
    &= \frac{1}{d_B}\tr \big[A^\mathbf{v}_B\cN(A^\mathbf{u}_A)\big].
\end{aligned}
\end{equation}
\end{definition}
A quantum circuit consisting of operations with non-negative Wigner functions can be classically simulated~\cite{pashayan2015estimating}. The completely positive Wigner preserving operations are considered as free operations for the resource theory of magic states. 
\begin{definition}{\rm (CPWP operation~\cite{wang2019quantifying})}\label{def:cpwp operations} 
A Hermiticity preserving linear map $\cN_{A\rightarrow B}$ is called completely positive Wigner preserving (CPWP) if for any system $R$ with odd dimension, the following holds
\begin{equation}
    ({\rm \cI}_R \ox \cN_{A\to B})(\rho_{RA})\in \cW_+, \quad \forall \rho_{RA} \in \cW_+.
\end{equation}
\end{definition}
It was shown that a quantum channel is CPWP if and only if its discrete Wigner functions are non-negative~\cite[Theorem~2]{wang2019quantifying}.

\subsection{The robustness of magic}
For the qubit case, however, the discrete phase space approach cannot be used to quantify the magic of quantum operation since it consists of some Clifford operations from free operations~\cite{delfosse2015wigner, raussendorf2017contextuality}, or is not able to build up representations under the tensor product~\cite{raussendorf2020phase}. Thus, we refer to the magic witness in terms of stabilizer states. We denote  $\mathrm{STAB}_n$ as the set of all $n$-qubit stabilizer states. 
Refs.~\cite{howard2017application, seddon2019quantifying} introduce a scheme to decompose density matrices as real linear combinations of pure stabilizer state projectors.

The pure states in $\mathrm{STAB}_n$ construct an overcomplete basis for the set of $2^n$-dimensional density matrices, where $n$ is the number of qubits. Any density matrix can be decomposed as an affine combination of pure stabilizer state projectors, i.e.,
\begin{equation}
    \rho = \sum_j x_j \ketbra{\phi_j}{\phi_j}, \quad \sum_jx_j=1, \quad \ketbra{\phi_j}{\phi_j}\in \mathrm{STAB}_n.
\end{equation}
Robustness of magic is defined as the minimal $l_1$-norm $\|\mathbf{x}\|_1 = \sum_j |x_j|$ over all possible decompositions~\cite{howard2017application}, i.e.,
\begin{equation}
    \cR(\rho) = \min_\mathbf{x}\Big\{ \|\mathbf{x}\|_1: \sum_j x_j\ketbra{\phi_j}{\phi_j}=\rho,~\ketbra{\phi_j}{\phi_j}\in \mathrm{STAB}_n\Big\}.
\end{equation}
Equivalently, it is given by
\begin{equation}
    \cR(\rho) = \min_{\rho_{\pm} \in \mathrm{STAB}_n} \big\{1+2p: (1+p)\rho_+ - p\rho_-=\rho, ~p \geq 0 \big\}.
\end{equation}
The optimization of the robustness can be written in terms of a linear system as~\cite{howard2017application}
\begin{equation}\label{eq: LP_robustness of magic}
    \cR(\rho) = \min_{\mathbf{x}} \big\{\|\mathbf{x}\|_1:   G\mathbf{x}=b \big\},
\end{equation}
where $G_{ij} = \tr [P_i \ketbra{\phi_j}{\phi_j}], \ketbra{\phi_j}{\phi_j} \in \mathrm{STAB}_n$, $b_i=\tr[P_i \rho]$,  where $P_i$ is the $i$-th Pauli operators given the number of qubits. \
The robustness of magic is a magic monotone, with the property of faithfulness: if $\rho\in$ $\mathrm{STAB}_n$, then $\cR(\rho)=1$ with every $0\leq x_j \leq1, \forall x_j$; otherwise $\cR(\rho)>1$. Note that this holds for an unnormalized state:
if a state has $\tr\rho = p$, then $\rho\in \mathrm{STAB}_n$ if and only if $\cR(\rho)=p$ with every $0\leq x_j \leq p, \forall x_j$. We use the tools~\cite{Seddon2022channelmagic} to calculate the linear programming Eq.~\eqref{eq: LP_robustness of magic} (see also~\cite{seddon2019quantifying, seddon2022advancing}).

The above definition of robustness of magic is also naturally extended to quantify quantum operations, where the free operations are characterized to map stabilizer states to stabilizer states~\cite{seddon2019quantifying}.

\begin{definition}{\rm (Completely stabilizer preserving operations~\cite{seddon2019quantifying})}\label{def:cspo operations} 
A CPTP map $\cN_{A\rightarrow B}$ is called completely stabilizer-preserving if for any system $R$, the following holds
\begin{equation}
    ({\rm \cI}_R \ox \cN_{A\to B})(\rho_{RA})\in {\rm STAB}_{m + n}, \quad \forall \rho_{RA} \in {\rm STAB}_{m + n},
\end{equation}
where $m,n$ are number of qubits for systems $R$ and $A$ respectively. 
\end{definition}
We abbreviate the completely stabilizer preserving operation as CSPO or CSPOs for convenience. It was proved that a quantum channel is a CSPO if and only if its Choi state $\Phi_{AB}^{\cN} = J_{AB}^{\cN}/d$ is a stabilizer state~\cite[Theorem~3.1]{seddon2019quantifying}. The channel robustness of magic $\cR_*(\cN)$ is then defined as 
\begin{equation}
    \cR_*(\cN) = \min_{\Lambda_{\pm}\in \text{CSPO}} \big\{2p+1: (1+p)\Lambda_+ - p\Lambda_-=\cN, ~p\geq 0 \big\}.
\end{equation}
An equivalent definition in terms of the Choi state is
\begin{equation}
    \cR_*(\cN) = \min_{\rho_{\pm} \in \mathrm{STAB}_{2n}} \Big\{2p+1: (1+p)\rho_+ - p\rho_-=\Phi_{AB}^{\cN}, ~p \geq 0,
    ~ \tr_B[\rho_{\pm}]=\frac{I_A}{d_A} \Big\}.
\end{equation}
This can also be computed using linear programming Eq.~\eqref{eq: LP_robustness of magic}. Channel robustness also satisfies the faithfulness: if $\cN$ is a CSPO, then $\cR_*(\cN)=1$; otherwise $\cR_*(\cN)>1$. Note that if an operation $\cN$ is trace non-increasing $\tr_B[\Phi_{AB}^{\cN}] \leq pI_A/d_A $, then $\cN\in$ CSPO if and only if $\cR_*(\cN) = p$ with every $x_j$ in the linear programming Eq.~\eqref{eq: LP_robustness of magic} satisfying $0\leq x_j \leq p, \forall x_j$. 

\section{Formal definitions of maximal non-magic purification fidelity and its SDPs}\label{appendix:sdp_max_fid}

We introduce the set of interest $\cA_{\text{CSPO}}\coloneqq \{\text{CSPO}\cap \text{CPTN} \}$, and $\cA_{\text{CPWP}}\coloneqq \{\text{CPWP}\cap \text{CPTN} \}$, and define the set of non-magic purification protocols $\cA$ as
\begin{equation}
    \cA\coloneqq \left\{ 
    \begin{aligned}
        & \cA_{\text{CSPO}},~ \text{(for qubit)}, \\
        &\cA_{\text{CPWP}},~ \text{(for odd dimensional qudit)}. \\
    \end{aligned}
    \right.
\end{equation}
Let $\ket{\psi}$ be the ideal pure state and denote $\psi=\ketbra{\psi}{\psi}$ as its density matrix. We try to find a protocol $\cE_{A^n \to A}$ to purify the noisy states $\cE_{A^n \to A}(\cN(\psi)^{\ox n})$. Assume that the pure state $\psi_i$ is from a discrete set of pure states $\Psi = \{\psi_1,\psi_2,\cdots  \}$, and denote the size of the set as $|\Psi|$. The maximal non-magic purification fidelity of $\Psi$ is defined as
\begin{definition}{\rm (Maximal non-magic purification fidelity)}\label{def:max_fidelity} 
For a finite set of pure states $\Psi = \{\psi_1,\psi_2,\cdots\} \subset \cD(\cH_A)$ of size $|\Psi|$, subject to a $d$-dimensional depolarizing channel $\cD_{\delta}$, the maximal non-magic purification fidelity with success probability $p\in(0,1]$ and $n$ copies of the noisy state is defined as
    \begin{equation}
    \begin{aligned}
        F_{\cD_\delta}^{\cA}(n, p; \Psi)\coloneqq  \max_{\cE_{A^n \to A}} & \; \frac{1}{p|\Psi|}{\sum_{\psi_i \in \Psi}\tr[\sigma_{\psi_i} \psi_i]} \\
        {\rm s.t.} & \; \frac{1}{|\Psi|} \sum_{\psi_i \in \Psi}\tr[\sigma_{\psi_i}] = p, \cE_{A^n \to A} \in \cA,
    \end{aligned} 
    \end{equation}
    where $\sigma_{\psi_i}=\cE_{A^n \to A}(\cD_\delta(\psi_i)^{\ox n})$ is the unnormalized post-selected state.   
\end{definition}
$F^{\cA}_{\cD_\delta}(n,p; \Psi)$ can be calculated through semidefinite programming~(SDP). By defining two key operators that will appear in our SDPs:

\begin{equation}\label{eq:discrete Q and R}
\begin{aligned}
     Q_{A^n_IA_O} &\coloneqq \frac{1}{|\Psi|}\sum_{\psi_i\in \Psi}\cD_{\delta}(\psi_i)^{\ox n} \ox \psi_i = (\cD_{\delta}^{\ox n} \ox \cI)\Big(\frac{1}{|\Psi|}\sum_{\psi_i\in \Psi}\psi_i^{\ox {n+1}}\Big),\\
     R_{A^n_IA_O} &\coloneqq \cD_{\delta} \Big(\frac{1}{|\Psi|}\sum_{\psi_i\in \Psi}\psi_i^{\ox n} \Big)\ox I,
\end{aligned}
\end{equation}
the SDP formulations of $F_{\cD_\delta}^{\cA}(n, p; \Psi)$ in odd qudit systems are given as follows.
\begin{proposition}\label{sdp:discrete cpwp}
For a finite set of pure states $\Psi = \{\psi_1,\psi_2,\cdots\} \subset \cD(\cH_A)$ of size $|\Psi|$, subject to an odd $d$-dimensional depolarizing channel $\cD_{\delta}$, the maximal non-magic purification fidelity with success probability $p\in(0,1]$ and $n$ copies of the noisy state is characterized by the following SDPs:
\begin{equation}
\begin{aligned}
&\underline{\textbf{Primal Program}}\\
\max_{J^{\cE}_{A^n_IA_O}} &\;\; \frac{1}{p}\tr\Big[J^{\cE}_{A^n_IA_O}Q_{A^n_IA_O}^{T_{A^n_I}}\Big]\\
 {\rm s.t.}
&\; \tr\Big[J^{\cE}_{A^n_IA_O}R_{A^n_IA_O}^{T_{A^n_I}}\Big]=p, \\
&\; \tr_{A_O}[J^{\cE}_{A^n_IA_O}]\leq I_{A^n_I}, \;J^{\cE}_{A^n_IA_O}\geq 0, \\
&\;  \tr\Big[\left((A^{\mathbf{u}}_{A^n_I})^T \ox A^{\mathbf{v}}_{A_O}\right)J^{\cE}_{A^n_IA_O}\Big] \geq 0, \; \forall \mathbf{u},\mathbf{v},
\end{aligned}
\begin{aligned}
&\underline{\textbf{Dual Program}}\\
\min_{x, Y_{A^n_I}, c_{\mathbf{u},\mathbf{v}}} &\,  -x - \frac{1}{p}\tr[Y_{A^n_I}]\\
{\rm s.t.} 
&\;Q_{A^n_IA_O}^{T_{A^n_I}} + xR_{A^n_IA_O}^{T_{A^n_I}} + Y_{A^n_I}\ox I_{A_O} \leq C^{T_{A^n_I}}_{A^n_IA_O}, \\
&\; C_{A^n_IA_O} = \sum_{\mathbf{u},\mathbf{v}}c_{\mathbf{u},\mathbf{v}} A^{\mathbf{u}}_{A^n_I} \ox A^{\mathbf{v}}_{A_O}, \\
&\; Y_{A^n_I} \leq 0, \; c_{\mathbf{u},\mathbf{v}}\leq 0, \; \forall \mathbf{u},\mathbf{v},
\end{aligned}
\end{equation}
where $J^{\cE}_{A^n_IA_O}$ denotes the Choi operator of $\cE_{A^n \to A}$ in $\cA_{\text{CPWP}}$, $T_{A^n_I}$ denotes the partial transpose operation on system $A^n_I$, and $Q_{A^n_IA_O}, R_{A^n_IA_O}$ are given in Eq.~\eqref{eq:discrete Q and R}.  $A^{\mathbf{u}}_{A_I^n}$ and 
$A^{\mathbf{v}}_{A_O}$ denote the $n$-copy and 1-copy phase space point operators, respectively.
\end{proposition}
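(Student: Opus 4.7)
The plan is to recast the optimization in Definition~S1 as a semidefinite program in the Choi operator $J^{\cE}_{A^n_IA_O}$ of the purification protocol and then obtain the stated dual by Lagrangian duality. Writing $\cE_{A^n\to A}(\rho)=\tr_{A^n_I}[(\rho^{T}\otimes I_{A_O})\,J^{\cE}]$ via the Choi--Jamiołkowski isomorphism and averaging over $\psi_i\in\Psi$, one finds $\frac{1}{|\Psi|}\sum_i\tr[\sigma_{\psi_i}\psi_i]=\tr[J^{\cE}\,Q^{T_{A^n_I}}]$ and $\frac{1}{|\Psi|}\sum_i\tr[\sigma_{\psi_i}]=\tr[J^{\cE}\,R^{T_{A^n_I}}]$ with $Q,R$ as in Eq.~\eqref{eq:discrete Q and R}. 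This directly reproduces the primal objective $\frac{1}{p}\tr[J^{\cE}Q^{T_{A^n_I}}]$ and the success-probability equality.

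The constraint $\cE\in\cA_{\text{CPWP}}$ translates into two linear families on $J^{\cE}$. Complete positivity and trace-nonincrease correspond to $J^{\cE}\geq 0$ and $\tr_{A_O}[J^{\cE}]\leq I_{A^n_I}$ in the standard Choi picture, while the CPWP criterion recalled in Appendix~\ref{appendix: preliminaries}---$\cE$ is CPWP iff $\cW_{\cE}(\mathbf{v}|\mathbf{u})=\frac{1}{d}\tr[((A^{\mathbf{u}}_{A^n_I})^T\otimes A^{\mathbf{v}}_{A_O})J^{\cE}]\geq 0$ for every phase-space pair---yields the remaining family of phase-space inequalities in the primal. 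At this stage the primal is a bona fide SDP in $J^{\cE}$.

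To obtain the dual I would form the Lagrangian with multiplier $x\in\RR$ for the equality constraint, $\Lambda_{A^n_I}\geq 0$ for $I_{A^n_I}-\tr_{A_O}[J^{\cE}]\geq 0$, and $\mu_{\mathbf{u},\mathbf{v}}\geq 0$ for each Wigner-nonnegativity inequality. Collecting the coefficient of $J^{\cE}$ and requiring it to be negative semidefinite (so that the supremum over $J^{\cE}\geq 0$ is finite) produces a matrix inequality; an affine relabelling $(x,\Lambda,\mu)\mapsto p\,(-x,-Y,-c)$ absorbs the $1/p$ normalization of the objective and yields precisely the dual program stated, with $C=\sum c_{\mathbf{u},\mathbf{v}}A^{\mathbf{u}}\otimes A^{\mathbf{v}}$, $Y\leq 0$, $c_{\mathbf{u},\mathbf{v}}\leq 0$, and objective $-x-\tr[Y]/p$.

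The only step requiring separate justification is strong duality. Slater's condition is easy to verify: a suitably rescaled maximally mixed Choi operator $J^{\cE}\propto I_{A^n_IA_O}$ lies strictly inside the PSD cone, has strictly positive Wigner values on every phase-space point in odd dimension, can be scaled to satisfy the probability equality, and remains trace-nonincreasing, so the primal and dual optima coincide. The proof is essentially mechanical; the only real bookkeeping obstacle is keeping the partial transpose on $A^n_I$ and the sign/scale conventions of the dual variables consistent with the $1/p$ normalization of the objective.
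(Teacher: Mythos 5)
Your proposal is correct and follows essentially the same route as the paper: rewrite the objective and probability constraint in terms of the Choi operator, encode CPTN and CPWP as the stated linear/semidefinite constraints, derive the dual by Lagrangian duality with a rescaling of the multipliers to absorb the $1/p$, and invoke Slater's condition with a (rescaled) maximally mixed Choi operator, which is exactly the strictly feasible point $J^{\cE}_{A^n_IA_O}=pI_{A^n_IA_O}/d_{A_O}$ used in the paper. No substantive differences.
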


\begin{proof}
We first derive the primal SDP from the Definition~\ref{def:max_fidelity} for an odd qudit system, then derive its dual program.
The objective function of the primal SDP is calculated as
\begin{equation}\label{eq:objective discrete sdp}
\begin{aligned}
    \frac{1}{p|\Psi|}\sum_{\psi_i \in \Psi}\tr[\sigma_{\psi_i} \psi_i] &= \frac{1}{p|\Psi|}\sum_{\psi_i \in \Psi} \tr\big[\cE_{A^n \to A}(\cD_\delta(\psi_i)^{\ox n}) \psi_i \big] \\
    &= \frac{1}{p|\Psi|}\sum_{\psi_i \in \Psi} \tr\Big[(J^{\cE}_{A^n_IA_O})^{T_{A_I^n}} \big(\cD_\delta(\psi_i)^{\ox n} \ox \psi_i\big)\Big] \\
    &= \frac{1}{p}\tr \left[(J^{\cE}_{A^n_IA_O})^{T_{A_I^n}}\frac{1}{|\Psi|}\sum_{\psi_i \in \Psi} \cD_\delta(\psi_i)^{\ox n} \ox \psi_i  \right] \\
    &= \frac{1}{p}\tr\Big[J^{\cE}_{A^n_IA_O}Q_{A^n_IA_O}^{T_{A_I^n}}\Big],
\end{aligned}
\end{equation}
and similarly $\frac{1}{|\Psi|} \sum_{\psi_i \in \Psi}\tr[\sigma_{\psi_i}] = \tr\Big[(J^{\cE}_{A^n_IA_O})^{T_{A_I^n}} R_{A^n_IA_O}\Big] = p$. Furthermore, the constraints are derived from the fact that $\cE_{A^n \to A}$ is in $\cA_{\text{CPWP}}$ where $\tr_{A_O}[J^{\cE}_{A^n_IA_O}]\leq I_{A^n_I},\;J^{\cE}_{A^n_IA_O}\geq 0$ imply the Choi matrix of a CPTN map, and $\tr\Big[\left((A^{\mathbf{u}}_{A^n_I})^T \ox A^{\mathbf{v}}_{A_O}\right)J^{\cE}_{A^n_IA_O}\Big] \geq 0$ imply the non-negative discrete Wigner functions of a quantum operation. We complete this calculation for the primal program.

To derive the dual SDP, the Lagrange function can be written as
\begin{equation}
\begin{aligned}
    &\cL(x, Y_{A^n_I}, c_{\mathbf{u},\mathbf{v}}, J^{\cE}_{{A^n_I A_O}}) \\
    &\coloneqq \frac{1}{p}\tr\Big[J^{\cE}_{{A^n_I A_O}}Q_{A^n_IA_O}^{T_{A^n_I}}\Big] +\Big\langle x,\tr\Big[J^{\cE}_{{A^n_I A_O}}R_{A^n_I A_O}^{T_{A^n_I}}\Big] - p\Big\rangle + \Big\langle Y_{A^n_I}, \tr_{A_O}[J^{\cE}_{{A^n_I A_O}}]-I_{A^n_I}\Big\rangle\\
    &\quad - \sum_{\mathbf{u}, \mathbf{v}}c_{\mathbf{u},\mathbf{v}} \tr\left[\left((A^\mathbf{u}_{A^n_I})^T \ox A^\mathbf{v}_{A_O}\right)J^{\cE}_{A^n_IA_O}\right] \\
    &=-px-\tr[Y_{A^n_I}]+ \Big\langle J^{\cE}_{{A^n_I A_O}}, \frac{1}{p}Q_{A^n_IA_O}^{T_{A^n_I}}+R_{A^n_IA_O}^{T_{A^n_I}}x+Y_{A_I^n}\ox I_{A_O} - \sum_{\mathbf{u}, \mathbf{v}}c_{\mathbf{u},\mathbf{v}} (A^\mathbf{u}_{A^n_I})^T \ox A^\mathbf{v}_{A_O} \Big\rangle,
\end{aligned}
\end{equation}
where $x, Y_{A^n_I}, c_{\mathbf{u},\mathbf{v}}$ are Lagrange multipliers. Then, the Lagrange dual function can be written as
\begin{align}
     \mathcal{G}(x, &Y_{A^n_I}, c_{\mathbf{u},\mathbf{v}})\coloneqq\sup_{J^{\cE}_{A^n_IA_O}\geq0}\mathcal{L}(x, Y_{A^n_I}, c_{\mathbf{u},\mathbf{v}}, J^{\cE}_{{A^n_I A_O}}).
\end{align}
Since $J^{\cE}_{{A^n_I A_O}}\geq0$, it holds that
\begin{align}
    &\frac{1}{p}Q_{A^n_IA_O}^{T_{A^n_I}}+xR_{A^n_IA_O}^{T_{A^n_I}}+Y_{A_I^n}\ox I_{A_O}  - \sum_{\mathbf{u}, \mathbf{v}}c_{\mathbf{u},\mathbf{v}} (A^\mathbf{u}_{A^n_I})^T \ox A^\mathbf{v}_{A_O}\leq 0.
\end{align}
Otherwise, the inner norm is unbounded. Similarly, we have $Y_{A^n_I}\leq 0$ due to $\tr_{ A}[J^{\cE}_{A^n_I A_O}] \leq I_{A^n_I}$ and $c_{\mathbf{u},\mathbf{v}} \leq 0$ due to $ \tr\left[\left((A^\mathbf{u}_{A^n_I})^T \ox A^\mathbf{v}_{A_O}\right)J^{\cE}_{A^n_IA_O}\right] \geq 0$. Redefine $x$ as $x/p$, $Y_{A^n_I}$ as $Y_{A^n_I}/p$ and $c_{\mathbf{u},\mathbf{v}}$ as $c_{\mathbf{u},\mathbf{v}}/p$, then we obtain the following dual SDP
\begin{equation}
\begin{aligned}
\min_{x, Y_{A^n_I}, c_{\mathbf{u},\mathbf{v}}} &\,  -x - \frac{1}{p}\tr[Y_{A^n_I}]\\
     {\rm s.t.} 
    &\;Q_{A^n_IA_O}^{T_{A^n_I}} + xR_{A^n_IA_O}^{T_{A^n_I}} + Y_{A^n_I}\ox I_{A_O} \leq C^{T_{A^n_I}}_{A^n_IA_O} , \\
    &\; C_{A^n_IA_O} = \sum_{\mathbf{u}, \mathbf{v}}c_{\mathbf{u},\mathbf{v}} A^\mathbf{u}_{A^n_I} \ox A^\mathbf{v}_{A_O}  , \\
    &\; Y_{A^n_I} \leq 0, \; c_{\mathbf{u},\mathbf{v}}\leq 0, \; \forall \mathbf{u}, \mathbf{v},
    \end{aligned}
\end{equation}
It is worth noting that strong duality is held by Slater's condition with $J^{\cE}_{A^n_IA_O} = pI_{A^n_IA_O}/d_{A_O}$. We complete the proof. 
\end{proof} 

The SDP formulations of $F_{\cD_\delta}^{\cA}(n, p; \Psi)$ in a multi-qubit system are similar,
\begin{proposition}\label{sdp:discrete cspo}
For a finite set of pure states $\Psi = \{\psi_1,\psi_2,\cdots\} \subset \cD(\cH_A)$ of size $|\Psi|$, subject to a multi-qubit depolarizing channel $\cD_{\delta}$, the maximal non-magic purification fidelity with success probability $p\in(0,1]$ and $n$ copies of the noisy state is characterized by the following SDPs:
\begin{equation}
\begin{aligned}
&\underline{\textbf{Primal Program}}\\
\max_{J^{\cE}_{A^n_IA_O}, x_j} &\;\; \frac{1}{p}\tr\Big[J^{\cE}_{A^n_IA_O}Q_{A^n_IA_O}^{T_{A^n_I}}\Big]\\
{\rm s.t.}
&\; \tr\Big[J^{\cE}_{A^n_IA_O}R_{A^n_IA_O}^{T_{A^n_I}}\Big]=p, \\
&\; \tr_{A_O}[J^{\cE}_{A^n_IA_O}]\leq I_{A^n_I}, \;J^{\cE}_{A^n_IA_O}\geq 0, \\
&\;  \sum_j G_{ij}x_j = b_i \;, x_j \geq 0, \; \forall j,\\
&\; b_i =  \frac{1}{d_{A^n_I}}\tr\big[P_iJ^{\cE}_{A^n_IA_O}\big],
\\
\end{aligned}
\begin{aligned}
&\underline{\textbf{Dual Program}}\\
\min_{x, Y_{A^n_I}, y_i} &\,  -x - \frac{1}{p}\tr[Y_{A^n_I}]\\
 {\rm s.t.} 
&\;Q_{A^n_IA_O}^{T_{A^n_I}} + xR_{A^n_IA_O}^{T_{A^n_I}} + Y_{A^n_I}\ox I_{A_O} \leq D_{A^n_IA_O}^{T_{A^n_I}}, \\
&\;  D_{A^n_IA_O} =  \sum_iy_i(P^i_{A^n_IA_O})^{T_{A^n_I}}, \\
&\; \sum_iy_iG_{ij} \leq 0 , \; Y_{A^n_I} \leq 0,
\end{aligned}
\end{equation}
where $J^{\cE}_{A^n_IA_O}$ denotes the Choi operator of $\cE_{A^n \to A}$ in $\cA_{\text{CSPO}}$, $T_{A^n_I}$ denotes the partial transpose operation on system $A^n_I$, and $Q_{A^n_IA_O}, R_{A^n_IA_O}$ are given in Eq.~\eqref{eq:discrete Q and R}. $G_{ij} = \tr [P_i \ketbra{\phi_j}{\phi_j}], \ketbra{\phi_j}{\phi_j} \in \mathrm{STAB}_{n+1}$, and $\{P_i\}$ are Pauli operators.
\end{proposition}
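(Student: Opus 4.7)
The plan is to mirror the proof of Proposition~\ref{sdp:discrete cpwp}, swapping the non-negative Wigner characterization of CPWP for the stabilizer-decomposition characterization of CSPO recalled around Eq.~\eqref{eq: LP_robustness of magic}. First I would reuse the Choi manipulation of Eq.~\eqref{eq:objective discrete sdp} verbatim: expanding $\tr[\sigma_{\psi_i}\psi_i]$ and $\tr[\sigma_{\psi_i}]$ in terms of the Choi operator $J^{\cE}_{A^n_I A_O}$ yields the primal objective $\tfrac{1}{p}\tr[J^{\cE}\,Q_{A^n_I A_O}^{T_{A^n_I}}]$ and the probability equality $\tr[J^{\cE}\,R_{A^n_I A_O}^{T_{A^n_I}}]=p$, while the CPTN conditions $\tr_{A_O}[J^{\cE}]\leq I_{A^n_I}$, $J^{\cE}\geq 0$ carry over unchanged.

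Next I would encode CSPO membership in linear-programming form. By the discussion following Definition~\ref{def:cspo operations}, a trace non-increasing map $\cE$ lies in $\cA_{\text{CSPO}}$ precisely when the Choi state $J^{\cE}/d_{A^n_I}$ admits a non-negative stabilizer-state decomposition $\sum_j x_j \ketbra{\phi_j}{\phi_j}$ with $\ket{\phi_j}\in\mathrm{STAB}_{n+1}$. Taking Pauli inner products on both sides converts this into the linking system $\sum_j G_{ij}x_j = b_i$ with $b_i = \tfrac{1}{d_{A^n_I}}\tr[P_i J^{\cE}]$, which becomes an additional family of primal equality constraints, together with the sign constraints $x_j\geq 0$. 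Both $J^{\cE}$ and $x_j$ enter as primal decision variables, linked only through $b_i$.

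The dual then falls out of the standard Lagrangian recipe used in the CPWP proof. I would attach a free scalar $\lambda$ to the probability equality, a multiplier $Y_{A^n_I}$ to the inequality $\tr_{A_O}[J^{\cE}]\leq I_{A^n_I}$, and free scalars $y_i$ to the Pauli equalities $\sum_j G_{ij}x_j - b_i = 0$. Collecting terms, the Lagrangian takes the form
\begin{equation*}
\bigl\langle J^{\cE},\, \tfrac{1}{p}Q^{T_{A^n_I}} + \lambda R^{T_{A^n_I}} + Y\otimes I_{A_O} - \tfrac{1}{d_{A^n_I}}\textstyle\sum_i y_i P_i \bigr\rangle \;+\; \textstyle\sum_j x_j \bigl(\sum_i y_i G_{ij}\bigr) \;-\; p\lambda \;-\; \tr[Y].
\end{equation*}
Boundedness of the supremum over $J^{\cE}\geq 0$ forces the operator in the first inner product to be negative semidefinite, boundedness over $x_j\geq 0$ forces $\sum_i y_i G_{ij}\leq 0$, and the sign of $Y$ is fixed to $Y\leq 0$ by the CPTN inequality, exactly as in Proposition~\ref{sdp:discrete cpwp}. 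Rescaling $\lambda\mapsto x/p$, $Y\mapsto Y/p$, and absorbing the factor $1/d_{A^n_I}$ into the $y_i$ puts the dual into the stated form with $D_{A^n_I A_O}^{T_{A^n_I}}=\sum_i y_i P^i_{A^n_I A_O}$. Strong duality follows from Slater's condition, witnessed by the strictly feasible interior point $J^{\cE}=pI_{A^n_I A_O}/d_{A_O}$ together with its uniform stabilizer decomposition.

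The main care point is bookkeeping rather than a genuine obstacle: I need to keep the partial transposes on $A^n_I$ in $Q$, $R$, and $D$ consistent with the convention $\tr[\cE(X)Y]=\tr[(J^{\cE})^{T_{A^n_I}}(X\otimes Y)]$, and to ensure that the $1/d_{A^n_I}$ normalization appearing in the definition of the Choi state is absorbed into the dual variables $y_i$ exactly once, so that the primal LP constraint $\sum_i y_i G_{ij}\leq 0$ survives in the final dual without a spurious factor. Once these normalizations line up, the derivation is a direct analogue of the CPWP case.
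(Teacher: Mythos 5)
Your proposal is correct and follows essentially the same route as the paper's proof: derive the primal objective and probability constraint via the Choi--Jamio\l{}kowski identity, encode CSPO membership through the stabilizer-decomposition linear system $\sum_j G_{ij}x_j=b_i$ with $b_i=\tfrac{1}{d_{A^n_I}}\tr[P_iJ^{\cE}]$ and $x_j\geq 0$, form the Lagrangian with multipliers $(x,Y,y_i)$, read off the dual constraints from boundedness, rescale by $p$ and $d_{A^n_I}$, and close with Slater's condition at the maximally mixed Choi operator. The normalization and partial-transpose bookkeeping you flag is handled in the paper exactly as you describe (the paper rescales $y_i\mapsto y_i d_{A^n_I}/p$), so there is no gap.
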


\begin{proof} We still derive the primal SDP from the Definition~\ref{def:max_fidelity} for multi-qubit system, then derive its dual program. The calculation of objective function and the constraint of $\tr\big[(J^{\cE}_{A^n_IA_O})^{T_{A_I^n}} R_{A^n_IA_O}\big] = p$ is identical with Eq.~\eqref{eq:objective discrete sdp}. Other constraints are derived from $\cE_{A^n \to A}$ in $\cA_{\text{CSPO}}$ where $\tr_{A_O}[J^{\cE}_{A^n_IA_O}]\leq I_{A^n_I}, \;J^{\cE}_{A^n_IA_O}\geq 0$ imply the Choi matrix of a CPTN map, and $b_i =  \frac{1}{d_{A^n_I}}\tr\big[P_iJ^{\cE}_{A^n_IA_O}\big]$ and $\sum_j G_{ij}x_j = b_i \;, x_j \geq 0, \; \forall j$ imply the constraints of channel robustness of magic for CSPOs.

To derive the dual SDP, the Lagrange function can be written as
\begin{equation}
\begin{aligned}
    &\cL(x, Y_{A^n_I}, y_i, J^{\cE}_{{A^n_I A_O}}) \\
    &\coloneqq \frac{1}{p}\tr\Big[J^{\cE}_{{A^n_I A_O}}Q_{A^n_IA_O}^{T_{A^n_I}}\Big] +\Big\langle x,\tr\Big[J^{\cE}_{{A^n_I A_O}}R_{A^n_I A_O}^{T_{A^n_I}}\Big] - p\Big\rangle + \Big\langle Y_{A^n_I}, \tr_{A_O}[J^{\cE}_{{A^n_I A_O}}]-I_{A^n_I}\Big\rangle\\
    &\quad - \sum_{i}y_i\Big( \frac{1}{d_{A^n_I}}\tr\big[P_iJ^{\cE}_{A^n_IA_O}\big] - \sum_j G_{ij}x_j\Big) \\
    &=-px-\tr[Y_{A^n_I}]+ \Big\langle J^{\cE}_{{A^n_I A_O}}, \frac{1}{p}Q_{A^n_IA_O}^{T_{A^n_I}}+R_{A^n_IA_O}^{T_{A^n_I}}x+Y_{A_I^n}\ox I_{A_O} - \frac{1}{d_{A^n_I}}\sum_i y_i P_i\Big\rangle + \sum_{i,j}y_iG_{ij}x_j,
\end{aligned}
\end{equation}
where $x, Y_{A^n_I}, y_i$ are Lagrange multipliers. Then, the Lagrange dual function can be written as
\begin{align}
     \mathcal{G}(x, &Y_{A^n_I}, y_i)\coloneqq\sup_{J^{\cE}_{A^n_IA_O}\geq0}\mathcal{L}(x, Y_{A^n_I}, y_i, J^{\cE}_{{A^n_I A_O}}).
\end{align}
To make the function bounded, we have 
\begin{equation}
    \frac{1}{p}Q_{A^n_IA_O}^{T_{A^n_I}}+R_{A^n_IA_O}^{T_{A^n_I}}x+Y_{A_I^n}\ox I_{A_O} - \frac{1}{d_{A^n_I}}\sum_i y_i P_i\leq 0,
\end{equation}
$Y_{A^n_I} \leq 0$, and $\sum_i y_iG_{ij}\leq 0$.
Redefine $x$ as $x/p$, $Y_{A^n_I}$ as $Y_{A^n_I}/p$ and $y_i$ as $y_id_{A^n_I}/p$, then we obtain the following dual SDP
\begin{equation}
    \begin{aligned}
\min_{x, Y_{A^n_I}, y_i} &\,  -x - \frac{1}{p}\tr[Y_{A^n_I}]\\
 {\rm s.t.} 
&\;Q_{A^n_IA_O}^{T_{A^n_I}} + xR_{A^n_IA_O}^{T_{A^n_I}} + Y_{A^n_I}\ox I_{A_O} \leq D_{A^n_IA_O}^{T_{A^n_I}}, \\
&\;  D_{A^n_IA_O} =  \sum_iy_i(P^i_{A^n_IA_O})^{T_{A^n_I}}, \\
&\; \sum_iy_iG_{ij} \leq 0 , \; Y_{A^n_I} \leq 0,
\end{aligned}
\end{equation}
The strong duality is held by Slater's condition with $J^{\cE}_{A^n_IA_O} = pI_{A^n_IA_O}/d_{A_O}$ and $x_j = \frac{p}{|\Phi|_{n+1} }$ for $P_i = I_{A^n_IA_O}$ and $x_j=0$ otherwise, where $|\Phi|_{n+1}$ is the number of states in $\mathrm{STAB}_{n+1}$. We complete the proof. 
\end{proof}

The proof extends to the continuous state set by replacing the summation with the integral over the measure of states on $Q_{A^n_IA_O}$ and $R_{A^n_IA_O}$.


\section{Proof of Theorem~\ref{thm:nogo of CPWP puri}}\label{appendix: proof_CPWP} 

\noindent\textbf{Theorem \ref{thm:nogo of CPWP puri}} (No-go theorem for universal probabilistic purification via CPWP operations)
\textit{There is no two-to-one universal probabilistic purification protocol using CPWP operations for qudit states with odd $d$ under depolarizing noise.}

\bigskip

\begin{proof}
We denote $\lambda_0$ as the average fidelity before any purification protocol: for any pure state $\psi$ in odd-dimensional qudit systems
\begin{equation}
    \lambda_0 = \int \tr[\cD_\delta(\psi)\psi] d\psi = (1-\delta + \frac{\delta}{d}) \int d\psi = 1-\frac{d-1}{d}\delta, 
\end{equation}
where $\int d\psi = 1$. To prove Theorem \ref{thm:nogo of CPWP puri}, we will demonstrate that the maximal non-magic universal purification fidelity always equals the average fidelity before any purification protocol, i.e. 
\begin{equation}
     F^{\cA}_{\cD_\delta}(2,p) = \lambda_0,\quad \forall p\in (0,1].
\end{equation}
We will prove the theorem using the SDP in Proposition~\ref{sdp:discrete cpwp}. For two-copy inputs universal purification, the $Q_{A^2_IA_O}$ and $R_{A^2_IA_O}$  are
\begin{equation}
\begin{aligned}
   Q_{A^2_IA_O} & \coloneqq \int \cD_{\delta}(\psi)^{\ox 2} \ox \psi \,d\psi 
   = (\cD_{\delta}^{\ox 2}\ox \cI)\left(\frac{\Pi_{3}}{D(3,d)}\right),\\
   R_{A^2_IA_O} & \coloneqq \cD_{\delta}^{\ox 2}\Big(\frac{\Pi_2}{D(2,d)}\Big)\ox I,
\end{aligned}
\end{equation}
where $\Pi_{n} = D(n,d) \int \psi^{\ox n} d\psi$ is the projector onto the symmetric subspace of $\cH^{\ox n}_A$ given by the Schur's lemma ~\cite{harrow2013church, khatri2020principles, mele2024introduction}, and $D(n,d)=\tbinom{n+d-1}{n}$ is the dimension of the symmetric subspace. For odd $d$, the purifying protocol is drawn from $\cA_{\text{CPWP}}$.
We first show $\lambda_0 \geq  F^{\cA}_{\cD_\delta}(2,p)$ using the primal SDP. It is obvious to choose the following Choi operator as a feasible solution
\begin{equation}
    J^{\cE}_{A_I^2A_O} = p\tr_{A_O}\Big[J^{\cI}_{A_I^2A_O^2}\Big],
\end{equation} 
where $J^{\cI}_{A_I^2A_O^2}$ is the Choi matrix of identity channel. The objective value is then calculated as
\begin{equation}
\begin{aligned}
\frac{1}{p}\tr\Big[J^{\cE}_{A^2_IA_O}Q_{A^2_IA_O}^{T_{12}}\Big] &= \frac{1}{p}\tr\left[p(J^{\cI}_{A^2_I A_O})^{T_{12}} (\cD^{\ox 2}_\delta\ox \cI)(\Pi_{3}/D(3,d))\right]\\
&= \int \tr\left[(J^{\cI}_{A^2_I A_O})^{T_{12}}\cD_\delta(\psi)^{\ox 2}\ox \psi\right] \,d\psi\\
&=\int \tr\Big[\psi\tr_{2}\big(\cD_\delta(\psi)^{\ox 2}\big)\Big]\,d\psi\\
&= \int \tr\left[\psi\cD_\delta(\psi)\right]\,d\psi \\
&= \lambda_0,
\end{aligned}
\end{equation}
where $T_{12}$ denotes the partial transpose on the first two systems, i.e., the two input systems $A^2_I$. Similarly, we have $\tr\Big[J^{\cE}_{A^2_IA_O}R_{A^2_IA_O}^{T_{12}}\Big]= p$. Thus $p\tr_{A_O}[J^{\cI}_{A_I^2A_O^2}]$ is a feasible solution for primal SDP. To demonstrate $\lambda_0 \leq  F^{\cA}_{\cD_\delta}(2,p)$ using dual SDP, a feasible solution for the dual SDP is $x = -\lambda_0, Y_{A^2_I} = \mathbf{0}$, and 
\begin{equation}
    C_{A^2_IA_O} = 2\a I_{123} - \a[\mathbf{P}_3((123)) + \mathbf{P}_3((132))],
\end{equation}
where 
\begin{equation}
    \a =  \frac{(1 - \delta ) \delta  (d (\delta -1)-\delta )}{d^3 (d+1)},
\end{equation}
and both $\mathbf{P}_3((123)), \mathbf{P}_3((132)))$ are the permutation operators in symmetric group $\cS_3$. Notice $\alpha \leq 0$ for $\delta\in [0,1], d\geq 2$.

The following calculation of the dual program has two parts. The first part is to show that  $Q_{A^2_IA_O}^{T_{12}} - \lambda_0 R_{A^2_IA_O}^{T_{12}} - C^{T_{12}}_{A^2_IA_O} \leq 0$ since we claim $x = -\lambda_0, Y_{A^2_I} = \mathbf{0}$. The second part is to show  $c_{\mathbf{u},\mathbf{v}}\leq 0, \forall \mathbf{u,v}$ for $C_{A^2_IA_O}$. 
For the first constraint
\begin{equation}
     Q_{A^2_IA_O}^{T_{12}} - \lambda_0 R_{A^2_IA_O}^{T_{12}} - C^{T_{12}}_{A^2_IA_O} \leq 0,
\end{equation}
which is equivalent to proving
\begin{equation}
    \lambda_0 R_{A^2_IA_O}^{T_{3}} - Q_{A^2_IA_O}^{T_{3}} + C^{T_{3}}_{A^2_IA_O} \geq 0,
\end{equation}
where $T_3$ denotes the partial transpose on the third system $A_O$. 
Let $\omega = \frac{1}{t}\big(\lambda_0 R_{A^2_IA_O}^{T_{3}} - Q_{A^2_IA_O}^{T_{3}} + C^{T_{3}}_{A^2_IA_O}\big)$ where $t = \tr \Big[\lambda_0 R_{A^2_IA_O}^{T_{3}} - Q_{A^2_IA_O}^{T_{3}} + C^{T_{3}}_{A^2_IA_O}\Big]$. Notice $\omega$ is a Hermitian operator, and calculate
\begin{equation}
    t = (d-1)(1-\delta) + 2\a d(d^2 - 1) = \frac{(d-1) (1 - \delta ) \left(d^2+2 d (\delta -1) \delta -2 \delta ^2\right)}{d^2} \geq 0.
\end{equation}
We first decompose $\lambda_0 R_{A^2_IA_O}^{T_{3}} - Q_{A^2_IA_O}^{T_{3}}$ as the linear combination of partial transpose permutation operators $\mathbf{P}_3(c)^{T_3}$~\cite{yao2024protocols}
\begin{equation}
\begin{aligned}
\lambda_0 R_{A^2_IA_O}^{T_{3}} - Q_{A^2_IA_O}^{T_{3}} = & \lambda_0 \int\cD_\delta(\psi)^{\ox 2}\ox I_3\,d\psi - \int\cD_\delta(\psi)^{\ox 2}\ox\psi^{T_3}\,d\psi \\
=& \lambda_0 \int\left((1-\delta)\psi + \frac{\delta}{d}I \right)^{\ox 2}\ox I_3\,d\psi - \int\left((1-\delta)\psi + \frac{\delta}{d}I \right)^{\ox 2}\ox\psi^{T_3}\,d\psi\\
=& \lambda_0 \left((1-\delta)^2\frac{\Pi_{12}}{D(2,d)} + \frac{2\delta-\delta^2}{d^2} I_{12} \right)\ox I_3 \\
&\quad- \left((1-\delta)^3\frac{\Pi_{123}^{T_3}}{D(3,d)} + \frac{2(1-\delta)^2\delta}{d }\frac{\Pi_{12}}{D(2,d)} \ox I_3 + \frac{3\delta - 2\delta^2}{d^2}I_{123} \right)\\
= &  \left(\frac{\lambda_0d(2\delta-\delta^2)-\delta^2}{d^3}-\frac{(1-\delta)\delta}{dD(2,d)} + \frac{\lambda_0(1-\delta)^2}{2D(2,d)}-\frac{(1-\delta)^2}{6D(3,d)}  \right)I_{123}\\
&\quad + \left(\frac{\lambda_0(1-\delta)^2}{2D(2,d)}-\frac{(1-\delta)^2}{6D(3,d)}\right)\mathbf{P}_3((12))^{T_3}
 \\
&\quad - \left(\frac{(1-\delta)^2}{6D(3,d)}+\frac{(1-\delta)\delta}{2dD(2,d)}\right)\Big(\mathbf{P}_3((13))^{T_3} + \mathbf{P}_3((23))^{T_3}\Big) \\
&\quad - \frac{(1-\delta)^2}{6D(3,d)}\Big(\mathbf{P}_3((123))^{T_3} + \mathbf{P}_3((132))^{T_3}\Big),
\end{aligned}
\end{equation}
where $\Pi_{12}$ and $\Pi_{123}$ are the projectors acting on $\{1,2\}$ and $\{1,2,3\}$ system, and $D(n,d) = \tbinom{n+d-1}{n}
$. Substitute $D(2,d) = d(d+1)/2, D(3,d) =(d+2)(d+1)d/6$ and $\a$ to the  $\lambda_0 R_{A^2A}^{T_3} - Q_{A^2A}^{T_3} + C_{A^2_IA_O}^{T_3}$, 
\begin{equation}
\begin{aligned}
&\lambda_0 R_{A^2_IA_O}^{T_3} - Q_{A^2_IA_O}^{T_{3}} + C^{T_{3}}_{A^2_IA_O} \\
= & -\frac{(\delta -1) \left(\left(d^2+d-2\right) \delta ^2+(d+1) d^2-2 d \delta \right)}{d^3 (d+1) (d+2)}I_{123} -\frac{(\delta -1)^2 (d (d+1) (\delta -1)-2 \delta )}{d^2 (d+1) (d+2)}\mathbf{P}_3((12))^{T_3}\\
&\quad + \frac{(\delta -1) (d+2 \delta )}{d^2 (d+1) (d+2)} \Big(\mathbf{P}_3((13))^{T_3} + \mathbf{P}_3((23))^{T_3}\Big) \\
& \quad + \frac{(\delta -1) \left(\left(d^2+d-2\right) \delta ^2+d^2-2 (d+1) d \delta \right)}{d^3 (d+1) (d+2)}\Big(\mathbf{P}_3((123))^{T_3} + \mathbf{P}_3((132))^{T_3}\Big).
\end{aligned}
\end{equation}
Notice that $\omega$ can be represented by a basis $\{S_+,S_-,S_0,S_1,S_2,S_3\}$~\cite{eggeling2001separability}. This basis can be generated by $I\coloneqq\mathbf{P}_3((1))$, $X\coloneqq\mathbf{P}_3((23))^{T_3}$ and $V\coloneqq\mathbf{P}_3((12))$ as follows,
\begin{equation}
\begin{aligned}
&S_+ \coloneqq \frac{I+V}{2}\left(I-\frac{2X}{d+1}\right)\frac{I+V}{2},~~ S_- \coloneqq \frac{I-V}{2}\left(I-\frac{2X}{d-1}\right)\frac{I-V}{2},\\
&S_0 \coloneqq \frac{1}{d^2-1}\big[d(X+VXV)-(XV+VX)\big],\\
&S_1\coloneqq \frac{1}{d^2-1}\big[d(XV+VX)-(X+VXV)\big],\\
&S_2\coloneqq \frac{1}{\sqrt{d^2-1}}(X-VXV),~~ S_3\coloneqq\frac{i}{\sqrt{d^2-1}}(XV-VX).
\end{aligned}
\end{equation}
They satisfy $X^2=dX$, $V^2=I$, $XVX=X$, and $\tr X=d^2$, $\tr[XV]=d$. Then calculate
\begin{equation}
    \begin{aligned}s_+ \coloneqq&\tr[\omega S_+] = 
\frac{d^2 ((\delta -2) \delta +2)+d \delta ^2-2 \delta ^2}{2 d^2+4 d (\delta -1) \delta -4 \delta ^2} \\
   s_- \coloneqq&\tr[\omega S_-] = -\frac{(d-2) \delta  (d\delta -2d -\delta )}{2 d^2+4 d (\delta -1) \delta -4 \delta ^2} \\
   s_j\coloneqq&\tr[\omega S_j] = 0, \; j\in \{0,1,2,3\}.
    \end{aligned}
\end{equation}
We can check $s_+,s_-,s_0\geq 0$, $s_+ + s_- + s_0=1$ and $s_1^2+s_2^2+s_3^2\leq s_0^2$. Followed from~\cite[Lemma 2]{eggeling2001separability}, we obtain $\omega\geq 0$ which means $\lambda_0 R_{A^2A}^{T_3} - Q_{A^2A}^{T_3} + C_{A^2_IA_O}^{T_3} \geq 0$. 

Then we are about to show that the coefficients of $C_{A^2_IA_O}$ under the phase space point operator basis are negative semidefinite, i.e., $c_{\mathbf{u},\mathbf{v}}\leq 0,  \forall \mathbf{u,v}$. Notice that phase space point operators of composite system are the tensor product of phase space point operators of their subsystem system $A^\mathbf{u}_{A^2_I} = A^\mathbf{r}_{A_I} \ox A^\mathbf{w}_{A_I}$ for some $\mathbf{r}, \mathbf{w}$, and it is equivalent to express $c_{\mathbf{u},\mathbf{v}}$ as $c_{\mathbf{r}, \mathbf{w},\mathbf{v}}$. Ignore the subscripts of the systems, and calculate
\begin{equation}\label{eq:c_rwv expansion}
\begin{aligned}
    c_{\mathbf{r}, \mathbf{w},\mathbf{v}} &= \frac{1}{d^3}\tr\left[(A^{\mathbf{r}} \ox A^{\mathbf{w}} \ox A^{\mathbf{v}} )C_{A^2_IA_O}\right] \\
    &= \frac{2\a}{d^3} \tr[A^{\mathbf{r}} \ox A^{\mathbf{w}} \ox A^{\mathbf{v}} ] - \frac{\a}{d^3} \tr\Big[(A^{\mathbf{r}} \ox A^{\mathbf{w}} \ox A^{\mathbf{v}} )\big[\mathbf{P}_3((123)) + \mathbf{P}_3((132))\big]\Big] .
    \end{aligned}
\end{equation}
Using tricks of permutation operators and properties of the phase space point operator, 
we have
\begin{equation}
\begin{aligned}
\tr\big[(A^{\mathbf{r}} \ox A^{\mathbf{w}} \ox A^{\mathbf{v}} )\big] &= 1, \\
\tr\big[(A^{\mathbf{r}} \ox A^{\mathbf{w}} \ox A^{\mathbf{v}} )\mathbf{P}_3((123))\big] 
&= \tr\big[A^{\mathbf{r}}  A^{\mathbf{v}}  A^{\mathbf{w}} \big], \\
\tr\big[(A^{\mathbf{r}} \ox A^{\mathbf{w}} \ox A^{\mathbf{w}} )\mathbf{P}_3((132))\big] 
&= \tr\big[A^{\mathbf{r}}  A^{\mathbf{w}}  A^{\mathbf{v}} \big]. \\
\end{aligned}
\end{equation}
Notice that $\tr\big[A^{\mathbf{r}}  A^{\mathbf{w}}  A^{\mathbf{v}} \big]$ and $\tr\big[A^{\mathbf{r}}  A^{\mathbf{v}}  A^{\mathbf{w}} \big]$ are complex conjugate to each other.
Substitute them to~\eqref{eq:c_rwv expansion}
\begin{equation}\label{eq:c_rwv final expansion}
\begin{aligned}
    c_{\mathbf{r}, \mathbf{w},\mathbf{v}} = \frac{1}{d^3}\tr\left[(A^{\mathbf{r}} \ox A^{\mathbf{w}} \ox A^{\mathbf{v}} )C_{A^2_IA_O}\right] &= \frac{2\a}{d^3} - \frac{\a}{d^3}(\tr\left[A^{\mathbf{r}}  A^{\mathbf{w}}  A^{\mathbf{v}} \right] + \left(\tr\left[A^{\mathbf{r}}  A^{\mathbf{w}}  A^{\mathbf{v}} \right])^*\right) \\
    &= \frac{2\a}{d^3} - \frac{2\a}{d^3} \text{Re}{(\tr\left[A^{\mathbf{w}}  A^{\mathbf{v}}  A^{\mathbf{r}} \right])} \\
    &= \frac{2\a}{d^3}(1 - \text{Re}{(\tr\left[A^{\mathbf{w}}  A^{\mathbf{v}}  A^{\mathbf{r}} \right])} ),
    \end{aligned}
\end{equation}
Then we introduce
a useful property of the phase space point operator:
\begin{lemma}\label{lemma:3rd_moment_pspop}
For any three phase space point $A^{\mathbf{r}}, A^{\mathbf{w}}, A^{\mathbf{v}}$,  $\text{Re}(\tr\left[A^{\mathbf{r}} A^{\mathbf{w}} A^{\mathbf{v}}\right]) \leq 1$. 
\end{lemma}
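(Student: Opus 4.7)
The plan is to prove the stronger statement that $\tr[A^{\mathbf{r}} A^{\mathbf{w}} A^{\mathbf{v}}]$ is a root of unity, and hence of modulus one, from which $\text{Re}(\tr[A^{\mathbf{r}} A^{\mathbf{w}} A^{\mathbf{v}}]) \leq 1$ is immediate. The strategy is to collapse the product $A^{\mathbf{r}} A^{\mathbf{w}}$ onto a single Heisenberg--Weyl operator times a unit phase, after which the remaining trace against $A^{\mathbf{v}}$ is manifestly a single root of unity.

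I would first record two preparatory identities derived from $A^{\mathbf{0}} = \frac{1}{d} \sum_{\mathbf{u}} T_{\mathbf{u}}$. Using the composition rule $T_{\mathbf{u}} T_{\mathbf{v}} = \tau^{[\mathbf{u},\mathbf{v}]} T_{\mathbf{u}+\mathbf{v}}$ (with symplectic form $[\mathbf{u},\mathbf{v}] = u_1 v_2 - u_2 v_1$), the orthogonality $\tr[T_{\mathbf{w}}] = d\,\delta_{\mathbf{w},\mathbf{0}}$, and the fact that for odd $d$ one has $\tau = \omega^{(d+1)/2}$ with $(d+1)/2$ invertible modulo $d$ (so that the character sum $\sum_{a=0}^{d-1}\tau^{ak}$ vanishes unless $k \equiv 0 \pmod d$), a direct calculation gives
\begin{equation}
(A^{\mathbf{0}})^2 = I, \qquad A^{\mathbf{0}} T_{\mathbf{u}} A^{\mathbf{0}} = T_{-\mathbf{u}},
\end{equation}
so $A^{\mathbf{0}}$ is an involution whose conjugation action implements phase-space inversion on the Heisenberg--Weyl group. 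Using $T_{\mathbf{u}}^{\dagger} = T_{-\mathbf{u}}$, expanding $A^{\mathbf{r}}A^{\mathbf{w}} = T_{\mathbf{r}} A^{\mathbf{0}} T_{-\mathbf{r}} T_{\mathbf{w}} A^{\mathbf{0}} T_{-\mathbf{w}}$, absorbing $T_{-\mathbf{r}}T_{\mathbf{w}}$ into $\tau^{-[\mathbf{r},\mathbf{w}]}T_{\mathbf{w}-\mathbf{r}}$, and applying the second identity to move $A^{\mathbf{0}}$ past the middle factor collapses the double product to
\begin{equation}
A^{\mathbf{r}}A^{\mathbf{w}} = \tau^{-4[\mathbf{r},\mathbf{w}]}\,T_{2(\mathbf{r}-\mathbf{w})}.
\end{equation}

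Finally, expanding $A^{\mathbf{v}} = \frac{1}{d}\sum_{\mathbf{l}} \omega^{[\mathbf{v},\mathbf{l}]} T_{\mathbf{l}}$ and using trace orthogonality once more gives $\tr[T_{\mathbf{k}} A^{\mathbf{v}}] = \omega^{-[\mathbf{v},\mathbf{k}]}$, whence
\begin{equation}
\tr[A^{\mathbf{r}} A^{\mathbf{w}} A^{\mathbf{v}}] = \omega^{-2\big([\mathbf{r},\mathbf{w}] + [\mathbf{w},\mathbf{v}] + [\mathbf{v},\mathbf{r}]\big)},
\end{equation}
a $d$-th root of unity with modulus one. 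The argument is conceptually short, and the main obstacle is purely bookkeeping: carrying the accumulated symplectic phase factors through the composition and conjugation steps without arithmetic slips. Oddness of $d$ is essential, as it is precisely what makes $(d+1)/2$ an invertible residue, collapses the relevant character sums to Kronecker deltas, and turns $A^{\mathbf{0}}$ into a genuine involution implementing phase-space inversion.
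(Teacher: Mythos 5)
Your proof is correct, and it arrives at the same closed-form answer as the paper (the trace is exactly a $d$-th root of unity, hence of modulus one), but by a genuinely different route. The paper works in the computational basis: it uses the explicit form $A^{\mathbf{0}}=\sum_{k\in\ZZ_d}\ket{k}\bra{-k}$, writes each $A^{\mathbf{u}}$ as a sum of rank-one terms, and evaluates the triple trace by solving a small linear system of Kronecker-delta constraints over $\ZZ_d$, which collapses the triple sum to the single phase $e^{4\pi i f/d}$. You instead stay at the level of the Heisenberg--Weyl group: the identities $(A^{\mathbf{0}})^2=I$ and $A^{\mathbf{0}}T_{\mathbf{u}}A^{\mathbf{0}}=T_{-\mathbf{u}}$ collapse $A^{\mathbf{r}}A^{\mathbf{w}}$ to $\tau^{-4[\mathbf{r},\mathbf{w}]}T_{2(\mathbf{r}-\mathbf{w})}$, and the Fourier expansion $A^{\mathbf{v}}=\frac{1}{d}\sum_{\mathbf{l}}\omega^{[\mathbf{v},\mathbf{l}]}T_{\mathbf{l}}$ finishes the trace; I have checked the phase bookkeeping and the exponents agree with the paper's, since $\omega^{-2([\mathbf{r},\mathbf{w}]+[\mathbf{w},\mathbf{v}]+[\mathbf{v},\mathbf{r}])}=e^{4\pi i f/d}$ with the paper's $f$. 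Your route buys a structural payoff: the symplectic covariance of the answer is manifest, and the argument carries over verbatim to composite odd-dimensional phase spaces where $[\cdot,\cdot]$ is the symplectic form on $\ZZ_d^{2n}$ (relevant since the paper applies the lemma to $A^{\mathbf{u}}_{A_I^2}=A^{\mathbf{r}}\ox A^{\mathbf{w}}$); the cost is that you must first establish the two preparatory identities, whereas the paper's computation is self-contained if less illuminating. Both arguments correctly hinge on oddness of $d$, via invertibility of $2$ (equivalently of $(d+1)/2$) modulo $d$.
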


\begin{proof} We first calculate: for any computational basis $\ket{k}$ and Heisenberg-Weyl operators $T_{\mathbf{u}}$, 
\begin{equation}
    \begin{aligned}
        T_{\mathbf{u}}\ket{k}=T_{a_1, a_2}\ket{k} &= \tau^{-a_1a_2}\omega^{a_1(k+a_2)}\ket{k \oplus a_2}. \\
    \end{aligned}
\end{equation}
where $w=e^{2\pi i/d}$ and $\oplus$ denotes addition modulo $d$. Note that $A^{\mathbf{0}} = \sum_{k\in \ZZ_d}\ket{k}\bra{-k} $~\cite{emeriau2022quantum}, then
\begin{equation}
    A^{\mathbf{u}} = T_{\mathbf{u}} A^{\mathbf{0}} T_{\mathbf{u}}^\dagger = \sum_{k\in \ZZ_d}T_{\mathbf{u}}\ket{k}\bra{-k}T_{\mathbf{u}}^\dagger = \sum_{k\in \ZZ_d}e^{\frac{2\pi i}{d}2a_1^{\mathbf{u}}k}\ket{k\oplus a_2^{\mathbf{u}}}\bra{-k \oplus a_2^{\mathbf{u}}}.
\end{equation}
Then we can calculate that
\begin{equation}
\begin{aligned}
    &\tr\left[A^{\mathbf{r}} A^{\mathbf{w}} A^{\mathbf{v}}\right] \\
    &= \sum_{l,m,n\in \ZZ_d} e^{\frac{2\pi i}{d}[2a_1^{\mathbf{r}}l + 2a_1^{\mathbf{w}}m + 2a_1^{\mathbf{v}}n]}\tr \Big[ \ket{l\oplus a_2^{\mathbf{r}}}\braket{-l \oplus a_2^{\mathbf{r}}} {m \oplus a_2^{\mathbf{w}}}\braket{-m \oplus a_2^{\mathbf{w}}}{n\oplus a_2^{\mathbf{v}}}\bra{-n \oplus a_2^{\mathbf{v}}} \Big] \\
    &= \sum_{l,m,n\in \ZZ_d} e^{\frac{4\pi i}{d}[a_1^{\mathbf{r}}l + a_1^{\mathbf{w}}m + a_1^{\mathbf{v}}n]} \delta(-l \oplus a_2^{\mathbf{r}} , m \oplus a_2^{\mathbf{w}})\delta(-m \oplus a_2^{\mathbf{w}} , n \oplus a_2^{\mathbf{v}})\delta(-n \oplus a_2^{\mathbf{v}}, l\oplus a_2^{\mathbf{r}}).
\end{aligned}
\end{equation}
To further simplify the delta function, solve the system of equations
\begin{equation}
\begin{cases}
-l\oplus a_2^{\mathbf{r}} = m \oplus a_2^{\mathbf{w}} \\
-m \oplus a_2^{\mathbf{w}} = n \oplus a_2^{\mathbf{v}} \\
-n \oplus a_2^{\mathbf{v}} = l\oplus a_2^{\mathbf{r}}
\end{cases}
\end{equation}
in terms of the variables $l,m,n$ under the finite field $\ZZ_d$. The solution is
\begin{equation}
\begin{cases}
l = -a_2^{\mathbf{w}} \oplus a_2^{\mathbf{v}} \\
m = -a_2^{\mathbf{v}} \oplus a_2^{\mathbf{r}} \\
n = -a_2^{\mathbf{r}} \oplus a_2^{\mathbf{w}}.
\end{cases}
\end{equation}
Substitute them back to obtain
\begin{equation}
\begin{aligned}
    &\tr\left[A^{\mathbf{r}} A^{\mathbf{w}} A^{\mathbf{v}}\right] \\
    &= \sum_{l,m,n}e^{\frac{4\pi i}{d}[a_1^{\mathbf{r}}l + a_1^{\mathbf{w}}m + a_1^{\mathbf{v}}n]} \delta( l, -a_2^{\mathbf{w}} \oplus a_2^{\mathbf{v}})\delta(m, -a_2^{\mathbf{v}} \oplus a_2^{\mathbf{r}})\delta(n, -a_2^{\mathbf{r}} \oplus a_2^{\mathbf{w}}) \\
    &= e^{\frac{4 \pi i}{d}[a_1^{\mathbf{r}}(a_2^{\mathbf{v}}-a_2^{\mathbf{w}}) + a_1^{\mathbf{w}}(a_2^{\mathbf{r}}-a_2^{\mathbf{v}}) + a_1^{\mathbf{v}}(a_2^{\mathbf{w}}-a_2^{\mathbf{r}})]} \\
    &= e^{\frac{4 \pi i}{d}f},
\end{aligned}
\end{equation}
where we replace the modular addition $\oplus$ with the common addition due to periodicity of the phase, and denote the expression $f = a_1^{\mathbf{r}}(a_2^{\mathbf{v}}-a_2^{\mathbf{w}}) + a_1^{\mathbf{w}}(a_2^{\mathbf{r}}-a_2^{\mathbf{v}}) + a_1^{\mathbf{v}}(a_2^{\mathbf{w}}-a_2^{\mathbf{r}})$. Thus we conclude that $\text{Re}(\tr\left[A^{\mathbf{r}} A^{\mathbf{w}} A^{\mathbf{v}}\right]) = \cos(\frac{4\pi f}{d}) \leq 1$.
\end{proof}

Using Lemma~\ref{lemma:3rd_moment_pspop}, and also observing $\a\leq 0$, it follows that $c_{\mathbf{r}, \mathbf{w},\mathbf{v}} \leq 0,  \forall \mathbf{r,w,v}$ for Eq.~\eqref{eq:c_rwv final expansion}. We complete the calculation for the dual SDP. Combining $\lambda_0 \geq  F^{\cA}_{\cD_\delta}(2,p)$ from the primal SDP and $\lambda_0 \leq  F^{\cA}_{\cD_\delta}(2,p)$ from the dual SDP, we conclude that $\lambda_0 =  F^{\cA}_{\cD_\delta}(2,p)$ and the proof is complete. 
\end{proof}


\section{Proof of Theorem~\ref{thm:nogo of CSPO puri}}\label{appendix: proof_CSPO} 

\noindent \textbf{Theorem \ref{thm:nogo of CSPO puri}} (No-go theorem for universal probabilistic purification via CSPOs)
\textit{There is no two-to-one universal probabilistic purification protocol using CSPOs for noisy qubit states under depolarizing noise.}

\bigskip

\begin{proof}
We still denote $\lambda_0$ as the average fidelity before any purification protocol: for any pure state $\psi$ in a qubit system
\begin{equation}
    \lambda_0 = \int \tr[\cD_\delta(\psi)\psi] d\psi  = 1-\frac{1}{2}\delta.
\end{equation}
We will still show that the maximal universal fidelity with $\cA$ for two copies is
\begin{equation}
     F^{\cA}_{\cD_\delta}(2,p) = \lambda_0,\quad \forall p\in (0,1].
\end{equation}
For qubit, the purifying protocol is drawn from $\cA_{\text{CSPO}}$. The proof is similar to $\cA_{\text{CPWP}}$ using the SDP in Proposition~\ref{sdp:discrete cspo}. The feasible solution of the primal SDP is also $J^{\cE}_{A_I^2A_O} = p\tr_{A_O}[J^{\cI}_{A_I^2A_O^2}]$ where $J^{\cI}_{A_I^2A_O^2}$ is the Choi matrix of identity channel,
whose Choi state is a stabilizer state,  satisfying $\sum_j G_{ij}x_j = b_i, x_j \geq 0, \; \forall j$. The feasible solution of dual SDP is identical to $\cA_{\text{CPWP}}$ with $d=2$: $x = -\lambda_0, Y_{A^2_I} = \mathbf{0}$, and 
\begin{equation}
    D_{A^2_IA_O} = 2\beta I_{123} - \beta(\mathbf{P}_3((123)) + \mathbf{P}_3((132)))
\end{equation}
where \begin{equation}
    \beta = \alpha|_{d=2} = \frac{(1 - \delta ) \delta  (\delta - 2 )}{24}.
\end{equation}

As we have shown that  $Q_{A^2_IA_O}^{T_{12}} - \lambda_0 R_{A^2_IA_O}^{T_{12}} - D^{T_{12}}_{A^2_IA_O} \leq 0$ which is the same as the $\cA_{\text{CPWP}}$ case, 
we just need to calculate  $\sum_iy_iG_{ij} \leq 0$ for $D_{A^2_IA_O}$, where $G_{ij} = \tr [P_i \ketbra{\phi_j}{\phi_j}], \ketbra{\phi_j}{\phi_j} \in \mathrm{STAB}_{3}$, and $\{P_i\}$ are Pauli operators for three qubit system. 

For two copies, $D_{A^2_IA_O}$ is written as 
\begin{equation}
    D_{A^2_IA_O} = \sum_i y_i(P^i_{A^2_IA_O})^{T_{12}} = \sum_{r,w,v=0}^3y_{r,w,v}(\sigma^r_{A_I} \ox \sigma^w_{A_I})^T \ox \sigma^v_{A_O},
\end{equation}
where $\sigma^i$ denote the single qubit Pauli matrices $\sigma^i \in \{I, X, Y, Z\}$ and $i \in \{0,1,2,3\}$. Ignore the subscripts of the systems, and calculate each coefficient $y_{r,w,v}$, 
\begin{equation*}
\begin{aligned}
   y_{r,w,v} &= \frac{1}{8}\tr\big[(\sigma ^r\ox \sigma^w  \ox \sigma^v )D_{A^2_IA_O}\big] \\
    &= \frac{(-1)^{m_r + m_w}}{8} \bigg[ 2\beta\tr [\sigma ^r\ox \sigma^w \ox \sigma^v ] -  \beta  \tr\Big[(\sigma ^r\ox \sigma^w \ox \sigma^v )[\mathbf{P}_3((123)) + \mathbf{P}_3((132))]\Big] \bigg],
\end{aligned}
\end{equation*}
where $m_r,m_w$ are sign factors due to the transpose of the Pauli $Y$ operator. When $\sigma^r=Y$, $m_r = 1$; otherwise $m_r=0$, and so is $m_w$. 
Still using the tricks of permutations 
\begin{align}
    \tr \big[\sigma ^r\ox \sigma^w \ox \sigma^v \big] &= 8\delta(r, 0)\delta(w, 0)\delta(v, 0), \\
    \tr\big[(\sigma ^r\ox \sigma^w \ox \sigma^v )\mathbf{P}_3((123)\big] &= \tr\big[\sigma ^r \sigma^v   \sigma^w \big], \\
     \tr\big[(\sigma ^r\ox \sigma^w \ox \sigma^v )\mathbf{P}_3((132)\big] &= \tr\big[\sigma ^r \sigma^w  \sigma^v \big].
\end{align}
$\tr\big[\sigma ^r \sigma^v   \sigma^w \big]$ and $\tr\big[\sigma ^r \sigma^w  \sigma^v \big]$ are complex conjugate to each other. We introduce 
a property for Pauli operators: for any three single qubit Pauli operators $\sigma ^l, \sigma ^m, \sigma ^n$,
\begin{equation}
    \tr\big[\sigma ^l \sigma ^m \sigma ^n\big] = 2i\epsilon_{0lmn} + 2\left[\delta(l,0)\delta(m,n) + \delta(n,0)\delta(l,m) + \delta(m,0)\delta(n,l)\right] - 4\delta(l,0)\delta(m,0)\delta(n,0),
\end{equation}
where $\epsilon_{0lmn}$ is the Levi-Civita symbol. Substitute them with
\begin{equation}
    \begin{aligned}
     y_{r,w,v} &= \frac{(-1)^{m_r + m_w}}{8}\Big[16\beta\delta(r, 0)\delta(w, 0)\delta(v, 0) -  2\beta\cdot \text{Re} \left( \tr[\sigma ^r \sigma^v   \sigma^w ] \right)\Big] \\
     &= \frac{(-1)^{m_r + m_w}}{8} \bigg[ 16\beta \delta(r, 0)\delta(w, 0)\delta(v, 0) - 2\beta \Big[ 2\big(\delta(r,0)\delta(w,v) + \delta(w,0)\delta(r,v)\\ 
     &\quad + \delta(v,0)\delta(r,w)\big) - 4\delta(r,0)\delta(w,0)\delta(v,0) \Big] \bigg] \\
     &= 3\beta \delta(r, 0)\delta
     (w, 0)\delta(v, 0) - \frac{\beta}{2}\big[\delta(r,w)\delta(v, 0) +  (-1)^{ m_w}\delta(r,0)\delta(w,v) + (-1)^{m_r}\delta(w,0)\delta(r,v)\big].
    \end{aligned}
\end{equation}
For a single qubit system with two-copy inputs, we have
\begin{equation}
    G_{ij} = \tr [P_i \ketbra{\phi_j}{\phi_j}] = \tr\big[(\sigma ^r\ox \sigma^w  \ox \sigma^v )\ketbra{\phi_j}{\phi_j}\big]
\end{equation}
where $\ketbra{\phi_j}{\phi_j} \in \mathrm{STAB}_{3}$. Then calculate $\sum_iy_iG_{ij}$, 
\begin{equation}
\begin{aligned}
\sum_iy_iG_{ij} &= \sum_{r,w,v=0}^3 y_{r,w,v}\tr\big[(\sigma ^r\ox \sigma^w  \ox \sigma^v )\ketbra{\phi_j}{\phi_j}\big] \\
& = \sum_{r,w,v=0}^3 \Big[ 3\beta \delta(r, 0)\delta(w, 0)\delta(v, 0) - \frac{\beta}{2}\Big( \delta(r,w)\delta(v, 0)\\
&\qquad +  (-1)^{m_w}\delta(r,0)\delta(w,v) + (-1)^{m_r}\delta(w,0)\delta(r,v)\Big) \Big]\tr\big[(\sigma ^r\ox \sigma^w  \ox \sigma^v )\ketbra{\phi_j}{\phi_j}\big] \\
&= 3\beta - \frac{\beta}{2} \Big[ \sum_{r=0}^3  \tr\big[(\sigma ^r\ox \sigma^r  \ox I)\ketbra{\phi_j}{\phi_j}\big] + \sum_{w=0}^3 (-1)^{m_w} \tr\big[(I \ox \sigma^w  \ox \sigma^w )\ketbra{\phi_j}{\phi_j}\big] \\
&\qquad + \sum_{r=0}^3 (-1)^{m_r} \tr\big[(\sigma ^r\ox I \ox \sigma^r )\ketbra{\phi_j}{\phi_j}\big] \Big]\\
&= \frac{3}{2}\beta - \frac{\beta}{2}  \sum_{r=1}^3  \Big[ \tr\big[(\sigma ^r\ox \sigma^r  \ox I)\ketbra{\phi_j}{\phi_j}\big] + (-1)^{m_r}\tr\big[(I \ox \sigma^r  \ox \sigma^r )\ketbra{\phi_j}{\phi_j}\big]\\
&\qquad + (-1)^{m_r}\tr\big[(\sigma ^r\ox I \ox \sigma^r )\ketbra{\phi_j}{\phi_j}\big] \Big].
\end{aligned}
\end{equation}
It is straightforward to check that $\sum_iy_iG_{ij}$ take the values $3\beta, 2\beta, \frac{3}{2}\beta, \beta, 0$ depending on the stabilizer state $\ket{\phi_j}$. Since $\beta\leq 0$, we conclude that $\sum_iy_iG_{ij} \leq 0, \forall j$. We complete the calculation of the dual SDP solution for $\cA_{\text{CSPO}}$. Combining $\lambda_0 \geq  F^{\cA}_{\cD_\delta}(2,p)$ from the primal SDP and $\lambda_0 \leq  F^{\cA}_{\cD_\delta}(2,p)$ from the dual SDP, we conclude that $\lambda_0 =  F^{\cA}_{\cD_\delta}(2,p)$ and the proof is complete. 
\end{proof}

\end{document}